\documentclass[10pt,journal,compsoc]{IEEEtran}
\usepackage{amsmath,amssymb,amsfonts,amsthm}
\usepackage{mathrsfs}
\usepackage{array}
\usepackage{booktabs}
\usepackage[caption=false,font=normalsize,labelfont=sf,textfont=sf]{subfig}
\usepackage{textcomp}
\usepackage{stfloats}
\usepackage{url}
\usepackage{verbatim}
\usepackage{subfig,graphicx}
\usepackage{cite}
\usepackage{xcolor}
\definecolor{color}{rgb}{0, 0, 0}
\definecolor{color1}{rgb}{1, 0, 0}
\usepackage{bm}
\usepackage[ruled,linesnumbered]{algorithm2e}
\usepackage{algpseudocode}
\makeatletter
\newcommand{\removelatexerror}{\let\@latex@error\@gobble}
\makeatother
\usepackage{mathtools}
\usepackage{epstopdf}
\newtheorem{remark}{Remark}
\newtheorem{proposition}{Proposition}
\allowdisplaybreaks[4]
\def\BibTeX{{\rm B\kern-.05em{\sc i\kern-.025em b}\kern-.08em
		T\kern-.1667em\lower.7ex\hbox{E}\kern-.125emX}}

\hyphenation{op-tical net-works semi-conduc-tor IEEE-Xplore}

\begin{document}
	\title{UAV Swarm-enabled Collaborative Secure Relay Communications with Time-domain Colluding Eavesdropper}
	
	\author{
		Chuang Zhang,
		Geng~Sun,~\IEEEmembership{Member,~IEEE,}
		Qingqing~Wu,~\IEEEmembership{Senior Member,~IEEE,}\\
		Jiahui~Li,~\IEEEmembership{Student Member,~IEEE,}
		Shuang~Liang,\\
		Dusit~Niyato,~\IEEEmembership{Fellow,~IEEE,}
		and Victor~C.M.~Leung,~\IEEEmembership{Life Fellow,~IEEE}
		\IEEEcompsocitemizethanks{\IEEEcompsocthanksitem Chuang Zhang and Geng Sun are with the College of Computer Science and Technology, Jilin University, Changchun 130012, China, and also with the Key Laboratory of Symbolic Computation and Knowledge Engineering of Ministry of Education, Jilin University, Changchun 130012, China. \protect\\
			E-mail: chuangzhang1999@gmail.com, 
			sungeng@jlu.edu.cn.
			\IEEEcompsocthanksitem Qingqing Wu is with the Department of Electronic Engineering, Shanghai Jiao
			Tong University, Shanghai, China.\protect\\
			E-mail: qingqingwu@sjtu.edu.cn.  
			\IEEEcompsocthanksitem Jiahui Li is with the College of Computer Science and Technology, Jilin University, Changchun 130012, China, and also with Pillar of Engineering Systems and Design, Singapore University of Technology and Design, Singapore 487372.\protect\\
			E-mail: lijiahui0803@foxmail.com.
			\IEEEcompsocthanksitem Shuang Liang is with the School of Information Science and Technology, Northeast Normal University, Changchun, 130024, China.\protect\\
			E-mail: liangshuang@nenu.edu.cn.
			\IEEEcompsocthanksitem Dusit Niyato is with the School of Computer Science and Engineering, Nanyang Technological University, Singapore 639798. \protect\\
			E-mail: dniyato@ntu.edu.sg. 
			\IEEEcompsocthanksitem Victor C. M. Leung is with the College of Computer Science and Software Engineering, Shenzhen University, Shenzhen 518060, China, and also with the Department of Electrical and Computer Engineering, University of British Columbia, Vancouver, BC V6T 1Z4, Canada. \protect\\
			E-mail: vleung@ieee.org.}
		\thanks{(\textit{Corresponding author: Geng Sun})\\
			A small part of this paper appeared in IEEE CSCWD 2023~\cite{CSCWD}.}
	}
	\markboth{Journal of \LaTeX\ Class Files,~Vol.~14, No.~8, August~2021}%
	{Shell \MakeLowercase{\textit{et al.}}: A Sample Article Using IEEEtran.cls for IEEE Journals}
	
	
	\IEEEtitleabstractindextext{%
		\begin{abstract}\label{abstract}
			Unmanned aerial vehicles (UAVs) as aerial relays are practically appealing for assisting Internet of Things (IoT) network. In this work, we aim to utilize the UAV swarm to assist the secure communication between the micro base station (MBS) equipped with the planar array antenna (PAA) and the IoT terminal devices by collaborative beamforming (CB), so as to counteract the effects of collusive eavesdropping attacks in time-domain. Specifically, we formulate a UAV swarm-enabled secure relay multi-objective optimization problem (US$^2$RMOP) for simultaneously maximizing the achievable sum rate of associated IoT terminal devices, minimizing the achievable sum rate of the eavesdropper and minimizing the energy consumption of UAV swarm, by jointly optimizing the excitation current weights of both MBS and UAV swarm, the selection of the UAV receiver, the position of UAVs and user association order of IoT terminal devices. Furthermore, the formulated US$^2$RMOP is proved to be a non-convex, NP-hard and large-scale optimization problem. Therefore, we propose an improved multi-objective grasshopper algorithm (IMOGOA) with some specific designs to address the problem. Simulation results exhibit the effectiveness of the proposed UAV swarm-enabled collaborative secure relay strategy and demonstrate the superiority of IMOGOA.
		\end{abstract}
		
		\begin{IEEEkeywords}
			UAV swarm, collaborative beamforming, collusive eavesdropping, secure communication, multi-objective optimization.
	\end{IEEEkeywords}}
	\maketitle
	\IEEEdisplaynontitleabstractindextext
	\IEEEpeerreviewmaketitle
	\section{Introduction}  
	\label{Section:Introduction}
	
	\par \IEEEPARstart{D}{ue} to decreasing cost and advancements in manufacturing technology, unmanned aerial vehicles (UAVs) have a significant impact on military and commercial applications\cite{Zeng2019}, \cite{Xu2021}. Especially in the field of wireless networks, UAVs have created a boom and derived a lot of new application scenarios in industry and academia. Integrating UAVs into the incorporated network system becomes a foregone choice for the space–air–ground–aqua network \cite{Liu2020}, \cite{Xu2022}, \cite{LiJiahui2023}. For example, a UAV can be regarded as an aerial base station to assist Internet of Things (IoT) terminal devices for data upload scenario \cite{Samir2020}, \cite{Pan2023}, wherein these devices have limited transmission power and do not have the ability to communicate over long distances. Moreover, a UAV can also act as an aerial user to access the terrestrial network for environmental monitoring and goods delivery \cite{Zeng2019a}. In addition, extending limited network coverage in post-disaster rescue can be efficiently achieved by UAV-enabled multi-hop relay strategy \cite{Zhang2018}.
	
	\par The UAV-enabled relay communication is a process leveraging UAV relay some information between ground-based communication equipment, which can expand the reach of network. However, a single UAV as a high-rate relay to assist the terrestrial network system is a challenging task due to the restricted battery capacity and limited transmit power. For example, in some long-distance communication settings, the UAV relay must first move to a position near the sender before moving to a position near the receiver, which significantly reduces the network lifetime and efficiency. Moreover, due to the broadcast nature of the wireless channel in UAV relay communications, the security is a key issue that should be taken into account seriously. Although UAV flying at a higher latitude provides line-of-sight (LoS) dominant channels for wireless communications, these links are also more vulnerable to the eavesdropping attacks, especially for UAV swarm-enabled multi-hop relay strategy since the risk of eavesdropping increases with the increase of the number of hops. Generally, the security can be regarded as a higher layer communication protocol stack design concern that could be addressed by using encryption methods. However, this requires high computational ability \cite{GengSun2023}, which is not suitable for UAVs with limited resources.
	
	\par Fortunately, collaborative beamforming (CB) \cite{Ochiai2005}, \cite{Ahmed2009}, as a communication technique originally used in wireless sensor networks, can enhance the signal strength and directivity. CB has garnered significant attention from researchers who seek to address the issue of secure and effective communication \cite{Zhang2010}, \cite{Yang2016}. Thus, it is reasonable to introduce CB for UAV swarm to assist terrestrial communications. Specifically, a UAV swarm-enabled virtual antenna array (UVAA) consisting of multiple UAVs can greatly improve the signal strength in a special direction by controlling the radio energy distribution, thereby increasing the transmission rate and enhancing the security of the UAV swarm-enabled relay system. Nevertheless, the UAV swarm-enabled collaborative secure relay communication system based on CB needs to consider several key factors. For example, UAVs in UVAA can move to suitable positions for achieving the higher achievable rate of legitimate user and the lower achievable rate of eavesdropper. However, this significantly causes additional energy consumption because of the movement of UAVs. Moreover, the excitation current weights of UVAA are crucial factors for the beam pattern which should be considered at the same time. Additionally, it needs to adopt the necessary approach to reduce the risk of eavesdropping for the source, e.g., the selection of UAV receiver is an important factor because this can cause different wiretap rates in CB information fusion phase. Thus, obtaining the more proper positions and excitation current weights of UAVs, selecting the appropriate UAV receiver for excellent and secure communication performance, and simultaneously reducing the movement energy consumption of the UAV swarm for the collaborative secure relay communication system are of importance. In this work, we further consider the joint optimization of source MBS and UVAA under the threat of time domain eavesdropper collusion in the complete relay communication process, which is a more practical scenario and a more comprehensive problem compared to \cite{Sun2022}. The major contributions of this paper are summarized as follows.
	
	\begin{itemize}
		\item \textbf{\textit{UAV Swarm-enabled Collaborative Secure Relay System Construction:}} We consider a secure relay communication scenario, where a UAV swarm-enabled collaborative secure relay system is constructed for transmitting confidential information from the source MBS with the planar array antenna (PAA) to the remote IoT terminal devices so as to counteract the threat of eavesdropper colluding in time-domain. To the best of our knowledge, this is the first work that considers the complete secure relay communication process from the source MBS to the remote IoT terminal devices assisted by the UAV swarm-enabled CB under the threat of time-domain collusive eavesdropper. {\color{color}Compared to existing work \cite{Sun2022}, \cite{Li2021}, the considered system is more comprehensive and practical.}  
		
		\item \textbf{\textit{Multi-objective Optimization Problem Formulation:}} We formulate a UAV swarm-enabled secure relay multi-objective optimization problem (US$^2$RMOP)  aiming to cooperatively maximize the achievable sum rate between the MBS and multiple remote IoT terminal devices, minimize the achievable sum rate of eavesdropper, and minimize the traveling energy consumption of the UAV swarm, by jointly optimize excitation current weights of both MBS and UAV swarm, the selection of the UAV receiver, the position of each UAV and user association order of IoT terminal devices. Furthermore, the US$^2$RMOP is proven to be a non-convex, large-scale optimization and NP-hard problem.
		
		\item \textbf{\textit{Algorithm Design:}} {\color{color}Due to the complex constraints and high-dimensional decision space of US$^2$RMOP, reinforcement learning and convex optimization algorithms face significant challenges, e.g., the curse of dimensionality and difficult convex relaxation.} Thus, we design an improved multi-objective grasshopper algorithm (IMOGOA) to solve the formulated US$^2$RMOP. First, IMOGOA adopts half-Halton-half-chaos (H$^3$C) and dynamic elimination-based crowding distance (DCDE) strategies to improve the distribution of the population. Moreover, the non-linear decreasing factor is used to better coordinate exploitation and exploration in IMOGOA. Additionally, we introduce L\'evy flight and archive update strategies to enhance the ability of going beyond the local optimum. {\color{color}The interaction among the aforementioned improvements enables IMOGOA to accomplish better diversity and uniformity when dealing with the formulated US$^2$RMOP.}
		
		\item \textbf{\textit{Simulation Validation:}} Simulation results illustrate the performance of the proposed IMOGOA by comparing it with some benchmarks. Moreover, the traditional UAV swarm-enabled multi-hop relay and linear antenna array strategies are introduced to verify the practicability of the UAV swarm-enabled collaborative secure relay communication system. {\color{color}In addition, the performance comparison of the proposed IMOGOA under two situations with multiple eavesdroppers is further analyzed.}
	\end{itemize}
	
	\par The rest of this paper is organized as follows. Section \ref{Section:Related Work} introduces some related work. Section \ref{Section:System Model and Preliminaries} provides the system model and preliminaries. The formulation of US$^2$RMOP is detailed and analyzed in Section \ref{Section:Problem Formulation and Analysis}. Section \ref{Section:Proposed Algorithm} designs the multi-objective optimization algorithm. Section \ref{Section:Simulation Results} shows simulation results and the conclusion of this paper is presented in Section \ref{Section:Conclusion}.
	
	\section{Related Work} 
	\label{Section:Related Work}
	
	\par The UAV-enabled communications have been widely studied in plenty of work. For example, Zeng \textit{et al}. \cite{Zeng2017} investigated a UAV-enabled data collection communication system with the aim of maximizing energy efficiency by optimizing UAV flight path. Zhang \textit{et al}. \cite{Zhang2019} studied a cellular-connected UAV flying mission completion problem for minimizing UAV energy consumption under the constraint quality-of-connectivity. Li \textit{et al}. \cite{Li2018} focused on a single UAV serving multiple ground users scenario to maximize the system throughput by planning UAV trajectory and optimizing subcarrier allocation strategy. Wu \textit{et al}. \cite{Wu2017} considered fair performance among multiple users by jointly optimizing user schedule and the UAV trajectory. Moreover, Hua \textit{et al}. \cite{Hua2020} investigated UAV-enabled simultaneous transmission in both uplink and downlink. Specifically, they considered two types of UAVs, one of which is regarded as an aerial base station that is responsible for receiving data from IoT devices, and the other serves as an aerial user for accessing the terrestrial network. The system throughput maximization is accomplished by corporately optimizing transmission power and trajectory of UAV and communication scheduling. In \cite{Meng2022}, the authors focused on UAV-enabled integrating communication and sensing system, where the user attainable rate is maximized through jointly optimizing transmit precoder, sensing start instant and the trajectory of UAV. In addition, Yang \textit{et al}. \cite{Yang2021} aimed at finding appropriately designed trajectory of a mobile UAV in backscatter communication system.
	
	\par The UAV-enabled secure communications have been the focus of several prior work. For example, in order to prevent a ground eavesdropper, Zhong \textit{et al}. \cite{Zhong2019} made use of the power and trajectory controls of both the UAV transmitter and a friendly UAV jammer. In \cite{Cai2018}, the authors proposed a dual-UAV enabled secure communication network involving multiple legitimate users and ground eavesdroppers, and the minimum worst-case secrecy rate for legitimate users was maximized by jointly optimizing trajectory of UAV and user scheduling. Zhou \textit{et al}. \cite{Zhou2018} investigated how friendly UAV jamming power and the corresponding three-dimensional deployment affected the likelihood of legitimate receivers being interrupted and the likelihood of unknown eavesdroppers being intercepted. Sun \textit{et al}. \cite{Sun2020} analyzed the secure performance of mmWave NOMA systems with both legitimate user and eavesdroppers by taking into account the spatial correlation between the selected legitimate users and eavesdroppers. In \cite{Cheng2019}, the authors used a novel iterative approach to jointly optimize the time schedule and trajectory of UAV to assure the security of UAV-relayed wireless networks. Na \textit{et al}. \cite{Na2022} considered a relay scenario by jointly optimizing resource allocation and UAV trajectory to maximize the minimum average secrecy rate among all IoT terminal devices. Moreover, in \cite{Ji2021}, the authors explored secure transmission in a cache-enabled UAV relay network with D2D communication and eavesdroppers. Specifically, they maximized the minimum secrecy rate between users by concurrently optimizing scheduling, trajectory and transmission power of UAV and user association.
	
	\par Several previous research has devoted into the UAV-enabled CB communications. For instance, Mohanti \textit{et al}. \cite{Mohanti2019} designed a UAV swarm-enabled CB framework and verified the feasibility of this scheme under air-to-ground channel. Mozaffari \textit{et al}. \cite{Mozaffari2019} investigated a UAV swarm-enabled CB technique for providing network service to ground users. Specifically, minimizing service time by reducing wireless transmission time as well as control time for UAV movement and stabilization are considered. Dinh \textit{et al}. \cite{Dinh2019} proposed a communication mode that considered both the flexible deployment and CB transmission of UAVs to maximize the number of admitted users by jointly optimizing the transmit beamforming, user admission decision, position planning and content placement. Zhu \textit{et al}. \cite{Zhu2018} studied a UAV swarm-enabled CB relay system, wherein minimizing the total transmit power of the UAV relays within the interference limits of the primary network and the quality of service (QoS) requirements of the cognitive network is formulated. Furthermore, Li \textit{et al}. \cite{Li2021} investigated a secure communication system where the UAVs communicate with multiple base stations by utilizing CB. In addition, in \cite{Sun2022}, Sun \textit{at al}. made use of CB to realize secure and energy-efficient communications for different terrestrial base stations.

	\par The primary distinctions between this work and the aforementioned research are seen that we consider a complete secure relay communication process between the MBS and remote IoT terminal devices in a UAV swarm-assisted terrestrial IoT network. Moreover, we study the more difficult scenario of security assurance, where the ground eavesdropper adopts a maximal ratio combining (MRC) technology in time domain for the relay process \cite{Yang2017}.
	
	\begin{figure}[!t]	\includegraphics[width=\linewidth,scale=1.00]{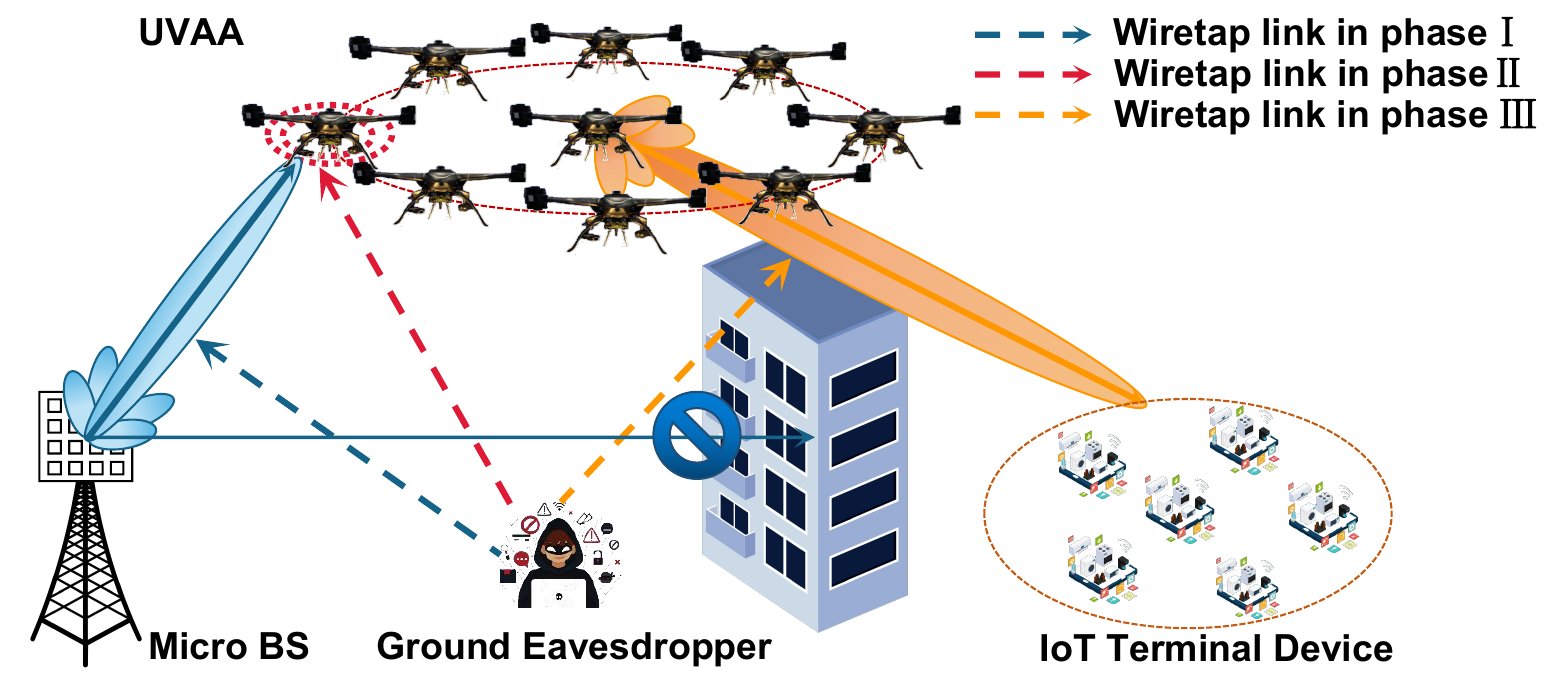}
		\caption{An illustration of UAV swarm-enabled collaborative secure relay communication system, where a UAV swarm is introduced to relay confidential messages between MBS (equipped with a PAA) and multiple IoT terminal devices via CB, and a ground eavesdropper performs eavesdropping in a time-domain colluding manner. Solid and dashed lines represent legitimate communication links and wiretap links, respectively.}
		\label{Figure:SystemModel}
	\end{figure}
	\section{System Model and Preliminaries} 
	\label{Section:System Model and Preliminaries}
	\par As illustrated in Fig.~\ref{Figure:SystemModel}, we consider multiple secure communications from a source MBS $\mathcal{S}$ to $T$ associated IoT terminal devices expressed as $\mathcal{D}=\{D_{1}, D_{2}, ..., D_{T}\}$. Specifically, $\mathcal{S}$ is equipped with an $M \times N$ PAA to enhance the spatial resolution. However, due to the existence of obstacles, all direct communication links from $\mathcal{S}$ to $\mathcal{D}$ are blocked. Moreover, a ground eavesdropper $\mathcal{E}$ potentially intercepts information during the communication in the system. Thus, a UAV swarm consisting of $K$ UAVs, denoted as $\mathcal{U} = \{U_{1}, U_{2}, ..., U_{K}\}$, form a virtual antenna array to relay confidential messages via CB. Here, we assume that {\color{color} each UAV is equipped with an omni-directional antenna with the full-duplex mode and} the exact position of $\mathcal{E}$ can be detected by a radar \cite{Yan2016} or an optical camera \cite{Sun2019}.
	
	\par In the considered system, the $i$th communication process between $\mathcal{S}$ and associated IoT terminal devices has three phases:
	
	\begin{itemize}
		\item \textbf{Phase \uppercase\expandafter{\romannumeral1}}: In this phase, the information is transmitted from $\mathcal{S}$ to the UAV swarm. Specifically, $\mathcal{S}$ employs traditional beamforming to transmit information to the selected UAV denoted as $U_{k}$ in the UAV swarm, and the link between $\mathcal{S}$ and $U_{k}$ is denoted as $\mathcal{S}2U_{k}(i)$.
		
		\item \textbf{Phase \uppercase\expandafter{\romannumeral2}}: In this phase, the information fusion is conducted in the UAV swarm. Specifically, $U_{k}$ serves as the cluster leader, and it broadcasts the received message from $\mathcal{S}$ to other UAVs directly. To simplify this problem, we assume that all individuals in the UAV swarm can communicate with each other at a high rate within the cluster.
		
		\item \textbf{Phase \uppercase\expandafter{\romannumeral3}}: On the basis of the Phase \uppercase\expandafter{\romannumeral2}, the UAV swarm forwards the information to the associated IoT terminal device $D_{i}$ via CB, and the link between UVAA center and $D_{i}$ is denoted as $\mathcal{C}2\mathcal{D}(i)$.
	\end{itemize}
	
	\par Moreover, $\mathcal{E}$ is within the coverage of the MBS and UAV swarm, and it can perform eavesdropping by MRC to maximize the eavesdropping rate during all three phases above. Specifically, the wiretap links in phases \uppercase\expandafter{\romannumeral1}, \uppercase\expandafter{\romannumeral2} and \uppercase\expandafter{\romannumeral3} are denoted as $\mathcal{S}2\mathcal{E}(i)$, $U_{k}2\mathcal{E}(i)$ and $\mathcal{C}2\mathcal{E}(i)$, respectively.
	
	\par Without loss of generality, a 3D Cartesian coordinate system is considered, where the positions of $\mathcal{S}$, IoT terminal devices and $\mathcal{E}$ are fixed. Specifically, the position of the $m$th array element of PAA arranged in rows is denoted as $(x_{m}^{P}, y_{m}^{P}, z_{m}^{P})$. Moreover, the positions of the $i$th IoT terminal device and the $k$th UAV are expressed as $(x_{i}^{D}, y_{i}^{D}, z_{i}^{D})$ and $(x_{k}^{U}, y_{k}^{U}, z_{k}^{U})$, respectively. Moreover, the positions of the PAA and UVAA centers are denoted as $(x_{P}, y_{P}, z_{P})$ and $(x_{C}, y_{C}, z_{C})$, respectively. 
	
	\par To simplify the expression, $[\overline{x_{m}^{P}}, \overline{y_{m}^{P}},\overline{z_{m}^{P}}]$ and $[\overline{x_{k}^{C}},\overline{y_{k}^{C}},\overline{z_{k}^{C}}]$ denote the 3D-component distances of array element $m$ and $k$ to the centers of the PAA and UVAA, respectively. Moreover, we denote the link pair sets of ground-to-air (G2A), ground-to-ground (G2G) and air-to-ground (A2G) as $G2A = \{\mathcal{S}2U_{k}(i) | k = 1,...,K, i = 1,...,T\}$, $G2G = \{\mathcal{S}2\mathcal{E}(i) | i = 1,...,T\}$ and $A2G = \{\mathcal{C}2D(i) | i = 1,...,T\} \cup \{\mathcal{C}2\mathcal{E}(i) | i = 1,...,T\} \cup \{U_{k}\mathcal{E}(i) | k = 1,...,K, i = 1,...,T\}$, respectively.

	%
	%
	\subsection{Channel Model} 
	\label{SubSection:Channel Model}
	\par In this section, the corresponding channel models about the UAV swarm-enabled collaborative secure relay communication system are given. 
	\subsubsection{G2A and A2G Channels} 
	\label{SubSubSection:G2A and A2G Channels}
	
	\par UAVs can bring a higher probability of LoS link for communications compared to ground-based equipment. However, the simplified LoS channel model is not sufficient to accurately characterize the signal propagation in complex environments for G2A and A2G links. In this work, we adopts the angle-dependent probabilistic LoS channel model \cite{Duo2021}, which is described as
	\begin{equation}
		P_{lp}^{LoS} = \frac{1}{1 + ae^{-b(\zeta_{lp} - a)}}, lp \in G2A \cup A2G, 
		\label{Equ:A2G/G2A LoS Probability}
	\end{equation}
	
	\noindent where $a$ and $b$ represent the parameters of the activation function ($S$-curve) as attributed to the environments. Moreover, $\zeta_{lp} = \arctan(dv_{lp}/dh_{lp})$ denotes the elevation angle between the sender and receiver, wherein $dv_{lp}$ and $dh_{lp}$ are the vertical and horizontal distances between the sender and receiver, respectively. Then, the NLoS probability is calculated by $P_{lp}^{NLoS} = 1 - P_{lp}^{LoS}$. 
	
	\par Furthermore, the channel power gain is described as
	\begin{equation}
		h_{lp} = P_{lp}^{LoS}h_{lp}^{LoS} + P_{lp}^{NLoS}h_{lp}^{NLoS}, lp \in G2A \cup A2G, 
		\label{Equ:A2G/G2A Channel Power Gain}
	\end{equation} 
	
	\noindent where $h_{lp}^{LoS} = \beta_{0}d_{lp}^{-\alpha_{LoS}}$ and $h_{lp}^{NLoS} = \mu \beta_{0}d_{lp}^{-\alpha_{NLoS}}$ denote the channel power gains under the conditions of the LoS and NLoS states, respectively. Moreover, $\beta_{0}$ represents the average channel power gain at a reference distance $d_{0}=1$m in the LoS state, $\mu < 1$ is the additional signal attenuation factor due to the NLoS propagation, $\alpha_{LoS}$ and $\alpha_{NLoS}$ indicate the average path loss exponents for the LoS and NLoS states, respectively, and $d_{lp}$ represents the distance between the sender and receiver. 
	
	\subsubsection{G2G Channel} 
	\label{SubSubSection:G2G Channel}
	
	\par For the ground wiretap channel from $\mathcal{S}$ to $\mathcal{E}$, we can express the channel power gain as
	\begin{equation}
		h_{lp} = \beta_{0}d_{lp}^{\alpha_{G}}, lp \in G2G, 
		\label{Equ:G2G Channel Gain}
	\end{equation}
	
	\noindent where $\alpha_{G} > 2$ is the path loss exponent for the G2G link.
	
	%
	%
	\subsection{Array Factor Model} 
	\label{SubSection:Array Factor Model}
	
	\par In this work, the excitation current weights of the $m$th array element arranged in rows of PAA and the $k$th UAV element in UVAA are denoted as $I^{P}_{m}$ and $I^{U}_{k}$, respectively. Accordingly, the array factor (AF) \cite{Balanis2016} of PAA can be mathematically given as
	\begin{equation}
		\begin{small}
			\begin{aligned}
				&AF_{P}(\theta^{P},\varphi^{P}|\theta_{0}^{P},\varphi_{0}^{P})=\\&\sum_{m=1}^{M \times N} I^{P}_{m}e^{j\Psi^{P}_{m}(\theta_{0}^{P},\varphi_{0}^{P})}e^{j[c_{p}(\overline{x^{P}_{m}}\sin\theta^{P}\cos\varphi^{P}+\overline{y^{P}_{m}}\sin\theta^{P}\sin\varphi^{P}+\overline{z^{P}_{m}}\cos\theta^{P})]},
				\label{Equ:Array Factor of PAA}
			\end{aligned}
		\end{small}
	\end{equation}
	
	\noindent where $\theta^{P} \in [0, \pi]$ and $\varphi^{P} \in [-\pi, \pi]$ are the elevation and azimuth angles at the center of PAA, respectively. Moreover, $c_{p} = 2\pi/\lambda$ represents the phase constant, and $\lambda$ is the wavelength. According to \cite{Sun2022}, $\Psi^{P}_{m}$ represents the initial phase of the $m$th array element of PAA and can be determined as 
	\begin{equation}
		\begin{aligned}
			&\Psi^{P}_{m}(\theta_{0}^{P},\varphi_{0}^{P})  = \\&-\frac{2\pi}{\lambda}(\overline{x^{P}_{m}}\sin\theta_{0}^{P}\cos\varphi_{0}^{P}+\overline{y^{P}_{m}}\sin\theta_{0}^{P}\sin\varphi_{0}^{P}+\overline{z^{P}_{m}}\cos\theta_{0}^{P}),
			\label{Equ:Initial Phase of PAA}
		\end{aligned}
	\end{equation}
	
	\noindent where $(\theta_{0}^{P},\varphi_{0}^{P})$ represents the direction of the designated UAV receiver of PAA. Likewise, the AF of the UVAA can be described as follow:
	\begin{small}
		\begin{equation}
			\begin{aligned}
				&AF_{U}(\theta^{U},\varphi^{U}|\theta_{0}^{U},\varphi_{0}^{U})=\\&\sum_{k=1}^{K}I^{U}_{k}e^{j\Psi^{U}_{k}(\theta_{0}^{U},\varphi_{0}^{U})}e^{j[c_{p}(\overline{x^{U}_{k}}\sin\theta^{U}\cos\varphi^{U}+\overline{y^{U}_{k}}\sin\theta^{U}\sin\varphi^{U}+\overline{z^{U}_{k}}\cos\theta^{U})]},
				\label{Equ:Array Factor of UVAA}
			\end{aligned}
		\end{equation}
	\end{small}
	
	\noindent where $\theta^{U} \in [0, \pi]$ and $\varphi^{U} \in [-\pi, \pi]$ are the elevation and azimuth angles at the center of UVAA, respectively. Moreover, $\Psi^{U}_{k}$ is the initial phase of the $k$th UAV of UVAA, and can be determined by
	\begin{equation}
		\begin{aligned}
			&\Psi^{U}_{k}(\theta_{0}^{U},\varphi_{0}^{U}) = \\&-\frac{2\pi}{\lambda}(\overline{x^{U}_{k}}\sin\theta_{0}^{U}\cos\varphi_{0}^{U}+\overline{y^{U}_{k}}\sin\theta_{0}^{U}\sin\varphi_{0}^{U}+\overline{z^{U}_{k}}\cos\theta_{0}^{U}),
			\label{Equ:Initial Phase of UVAA}
		\end{aligned}
	\end{equation}
	
	\noindent where $(\theta_{0}^{U}, \varphi_{0}^{U})$ is the direction of the designated associated IoT terminal device. 
	
	%
	%
	\subsection{Achievable Rate Model} 
	\label{SubSection:Achievable Rate Model}
	In this section, the achievable rates of IoT terminal Devices and the ground eavesdropper are presented.
	\subsubsection{Achievable Rate of the IoT Terminal Device $\mathcal{D}_{i}$} 
	\label{SubSubSection:Achievable Rate of IoT Terminal Device}
	
	\par For the phases \uppercase\expandafter{\romannumeral1} and \uppercase\expandafter{\romannumeral3} of the communication process, the signal-to-noise ratio (SNR) of $\mathcal{S}2U_{k}(i)$ and $\mathcal{C}2D(i)$ can be calculated as
	\begin{equation}
		\gamma_{\mathcal{S}2U_{k}}(i) = \frac{P_{\mathcal{S}}G_{0}^{P}h_{\mathcal{S}2U_{k}}(i)}{\sigma^2}
		\label{Equ:the SNR between BS and UAV k}
	\end{equation}
	\noindent and
	\begin{equation}
		\gamma_{\mathcal{C}2\mathcal{D}}(i) = \frac{P_{U}G_{0}^{U}h_{\mathcal{C}\mathcal{D}}(i)}{\sigma^2}, 
		\label{Equ:the SNR between UVAA and IoT i}
	\end{equation}
	
	\noindent where $P_{\mathcal{S}}$ and $P_{U}$ are the transmission power of PAA and UVAA, respectively. Moreover, $h_{\mathcal{S}2U_{k}}(i)$ represents the channel power gain between $\mathcal{S}$ and designated UAV receiver $U_{k}$ in the $i$th communication process, $h_{\mathcal{C}2D}(i)$ is the channel power gain between the UAV swarm and designated IoT terminal device $\mathcal{D}$ in the $i$th communication process, and the noise power of the channel is represented as $\sigma^2$. In addition, the gain $G_{0}^{P}$ and $G_{0}^{U}$ of PAA and UVAA towards the legitimate receivers can be respectively calculated as
	\begin{equation}
		G_{0}^{P} = \frac{4\pi\left|AF_{P}(\theta_{0}^{P},\varphi_{0}^{P}|\theta_{0}^{P},\varphi_{0}^{P})\right|^2w\left(\theta_{0}^{P}, \varphi_{0}^{P}\right)^{2}}{\int_{0}^{2 \pi} \int_{0}^{\pi}|AF_{P}(\theta^{P}, \varphi^{P})|^{2} w(\theta^{P}, \varphi^{P})^{2} \sin \theta^{P} \mathrm{d} \theta^{P} \mathrm{d} \varphi^{P}} \eta_{P} 
		\label{Equ:the array gain torwards UAV k}
	\end{equation}
	\noindent and 
	\begin{equation}
		G_{0}^{U} = \frac{4\pi\left|AF_{U}(\theta_{0}^{U},\varphi_{0}^{U}|\theta_{0}^{U},\varphi_{0}^{U})\right|^2w\left(\theta_{0}^{U}, \varphi_{0}^{U}\right)^{2}}{\int_{0}^{2 \pi} \int_{0}^{\pi}|AF(\theta^{U}, \varphi^{U})|^{2} w(\theta^{U}, \varphi^{U})^{2} \sin \theta^{U} \mathrm{d} \theta^{U} \mathrm{d} \varphi^{U}} \eta_{U}, 
		\label{Equ:the array gain torwards IoT i}
	\end{equation}
	
	\noindent where $w(\theta^P, \varphi^P)$ and $w(\theta^U, \varphi^U)$ represent the magnitude of the far-field beam pattern of each array element in PAA and UVAA, respectively. Moreover, $\eta_{P}$ and $\eta_{U}$ are the antenna efficiencies of PAA and UVAA, respectively. Note that $w(\theta^P, \varphi^P)$ and $w(\theta^U, \varphi^U)$ are $0$ dB in all directions in this system since we consider each array element of PAA and UVAA equipped with a single isotropic antenna with identical power constraints.
	
	\par Accordingly, the achievable rate between $\mathcal{S}$ and designated IoT terminal device $\mathcal{D}$ in the $i$th communication process can be expressed as
	\begin{equation}
		R_{\mathcal{S}2\mathcal{D}}(i) = B\log_{2}\left(1 + \widehat{\min}\{\gamma_{\mathcal{S}2U_{k}}(i), \gamma_{\mathcal{C}2\mathcal{D}}(i)\}\right), 
		\label{Equ:the rate between BS and IoT i}
	\end{equation}
	
	\noindent {\color{color}where $B$ represents the transmission bandwidth, and $\widehat{\min}\{\cdot\}$ refers to an operator that calculates the minimum value of the two elements. Note that we ignore the SNR limitation of the broadcasting process due to the close individual distance between the UAVs.}
	
	
	\subsubsection{Achievable Rate of the Ground Eavedropper $\mathcal{E}$}
	\label{SubSubSection:Achievable Rate of GE}
	
	\par For the phases \uppercase\expandafter{\romannumeral1}, \uppercase\expandafter{\romannumeral2} and \uppercase\expandafter{\romannumeral3} of the $i$th communication process, the SNRs of three wiretap links, i.e., $\mathcal{S}2\mathcal{E}(i)$, $U_{k}2\mathcal{E}(i)$ and $\mathcal{C}2\mathcal{E}(i)$, can be calculated as
	\begin{equation}
		\label{Equ:phase 1 eavesdropping SNR}
		\begin{aligned}
			\gamma_{\mathcal{\mathcal{S}}2\mathcal{E}}(i) = \frac{P_{S}G_{\mathcal{E}}^{P}h_{\mathcal{S}2\mathcal{E}}(i)}{\sigma^{2}},
		\end{aligned}
	\end{equation}
	\begin{equation}
		\label{Equ:phase 2 eavesdropping SNR}
		\begin{aligned}
			\gamma_{U_{k}2\mathcal{E}}(i) = \frac{P_{U_{k}}h_{U_{k}2\mathcal{E}}(i)}{\sigma^2}
		\end{aligned}
	\end{equation}
	\noindent and
	\begin{equation}
		\label{Equ:phase 3 eavesdropping SNR}
		\gamma_{\mathcal{C}2\mathcal{E}}(i) = \frac{P_{U}G_{\mathcal{E}}^{U}h_{\mathcal{C}2\mathcal{E}}(i)}{\sigma^2},
	\end{equation}
	
	\noindent where $G_{\mathcal{E}}^{P}$ and $G_{\mathcal{E}}^{U}$ can be calculated according to the same principle as Eqs.~\eqref{Equ:the array gain torwards UAV k} and \eqref{Equ:the array gain torwards IoT i}. Moreover, $P_{U_{k}}$ represents the broadcast transmission power of $U_{k}$ in phase \uppercase\expandafter{\romannumeral2}.
	
	\par Accordingly, the achievable rate of $\mathcal{E}$ during the $i$th communication process can be expressed as
	\begin{equation}
		R_{\mathcal{E}}(i) = B\log_{2}\left(1 + \gamma_\mathcal{E}(i)\right), 
		\label{Equ:the rate of GE}
	\end{equation}
	
	\noindent where $\gamma_\mathcal{E}(i)$ is the maximal SNR obtained by using MRC technique during the $i$th communication process, and can be calculated as $\gamma_\mathcal{E}(i) = \gamma_{\mathcal{S}2\mathcal{E}}(i) + \gamma_{U_{k}2\mathcal{E}}(i) + \gamma_{C\mathcal{E}}(i)$.
	
	%
	%
	\subsection{Rotary-Wing UAV Energy Consumption Model} 
	\label{SubSection:Rotary-Wing UAV Energy Consumption Model}
	
	\par For the UAV swarm-enabled collaborative secure relay communication system, the energy consumption of UAVs consists of the communication energy consumption generated by transmitting data and the propulsion energy consumption to overcome air drag and gravity. Furthermore, the communication energy consumption is usually two orders of magnitude smaller than the propulsion energy consumption in practical applications \cite{Ding2020}. Therefore, we only concern the propulsion energy consumption of the UAV swarm in this paper. According to \cite{Zeng2019}, the energy consumption for a UAV flying in a straight-and-level manner with the speed $v$ can be modeled as
	\begin{equation}
		\begin{aligned}
			P(v) =&P_{b}\left(1+\frac{3v^{2}}{u_{tips}^2}\right)\,\, +P_{i}\left(\sqrt{1+\frac{v^4}{4u_{0}^4}}-\frac{v^2}{2u_{0}^2}\right)^{\frac{1}{2}}+\\ &\frac{1}{2}d_{0}\rho sAv^{3},
		\end{aligned}   
		\label{Equ:the power estimated about UAV}
	\end{equation}
	
	\noindent where $P_{b}$ and $P_{i}$ are two constants related to the flight speed ${v}$, which denote the blade profile power and induced power under the hovering condition, respectively. $u_{tips}$ is the tip speed of the rotor blade, and $u_{0}$ represents the mean rotor-induced velocity in hovering. Moreover, $d_0$ and $\rho$ denote the fuselage drag ratio and air density, respectively. $s$ and $A$ denote the rotor solidity and rotor disc area, respectively.
	
	\par The energy consumption including the UAV climbing and descending with time by using the heuristic closed-form can be measured as
	\begin{equation}
		\begin{aligned}
			E(T)\approx &\int_{0}^{T}P\left(v(t)\right)dt + mg\left(h(T)-h(0)\right)  \\& + \frac{1}{2}m\left(v(T)^{2}-v(0)^2\right),
		\end{aligned}
		\label{Equ:the energy consumption about UAV}
	\end{equation}
	
	\noindent where $T$ refers to the duration of flight time, and $v(t)$ represents the UAV speed at the time instant $t$. Moreover, $g$ and $m$ denote the gravitational acceleration and the mass of the UAV.
	
	%
	%
	%
	%
	%
	%

	\section{Problem Formulation and Analysis} 
	\label{Section:Problem Formulation and Analysis}
	
	\par In this section, the US$^2$RMOP is formulated and the corresponding analysis of the problem is presented.
	
	%
	%
	\subsection{Problem Formulation} 
	\label{SubSection:Problem Formulation and Analysis}
	
	\par In the considered scenario, $\mathcal{S}$ transmits the data to the associated IoT terminal devices with the assistance of UAV swarm as a relay. The main goal of the UAV swarm-enabled collaborative secure relay system is to guarantee the achievable rate of the associated IoT terminal devices while minimizing the achievable rate of $\mathcal{E}$. 
	
	\par Specifically, maximizing the achievable sum rate of IoT terminal devices and minimizing the achievable sum rate of eavesdropper can be realized by optimizing the beam pattern of the PAA and UVAA. According to Eqs.~\eqref{Equ:Array Factor of PAA} and \eqref{Equ:Array Factor of UVAA}, the positions and excitation current weights of the array elements can be adjusted to accomplish the better directivity of PAA and UVAA, which means that we can let the UAVs fly to better positions, and use optimal excitation current weights for the relay communication. Besides, the proper selection of UAV receiver $U_k$ in the UVAA can also increase the achievable rate of IoT terminal devices and reduce the achievable rate of $\mathcal{E}$. However, the energy consumption of the UAV elements in the UVAA will undoubtedly increase due to movement, and the positions of UAVs need to be re-tuned after communicating with an IoT terminal device since the mainlobe of the UVAA can only direct in the direction of one receiver each time. Accordingly, the multiple performances of the system should be comprehensively considered.
	
	\par Defining the optimization decision variable, i.e., the solution as $\mathbb{X} = (\mathbb{I}^{\mathrm{MN} \times \mathrm{T}}_{\mathrm{P}}, \mathbb{S}^{1 \times \mathrm{T}}_{\mathrm{P}}, \mathbb{I}^{\mathrm{K} \times \mathrm{T}}_{\mathrm{U}}, \mathbb{P}^{\mathrm{K} \times \mathrm{T}}_\mathrm{U}, \mathbb{O}^{1 \times \mathrm{T}}_{\mathrm{U}})$, which is detailed in Table \ref{Table:Statement of Variables}, and the three optimization objectives are formulated as follows.
	
	\par \textbf{\emph{Optimization objective 1}:} The first optimization objective is to maximize the achievable sum rate of IoT terminal devices, which is related to the excitation current weights of both PAA and UVAA, the position of each UAV, and the selection of UAV receiver. Therefore, the first objective function can be designed as
	\begin{equation}
		\begin{aligned}
			f_{1}(\mathbb{I}^{\mathrm{MN} \times \mathrm{T}}_{\mathrm{P}}, \mathbb{S}^{1 \times \mathrm{T}}_{\mathrm{P}}, \mathbb{I}^{\mathrm{K} \times \mathrm{T}}_{\mathrm{U}}, \mathbb{P}^{\mathrm{K} \times \mathrm{T}}_\mathrm{U}) = \sum_{i = 1}^{T} R_{\mathcal{S}2\mathcal{D}}(i). 
			\label{Equ:Optimization objective 1}
		\end{aligned}
	\end{equation}
	\begin{remark}
		Some factors of phases \uppercase\expandafter{\romannumeral1} and \uppercase\expandafter{\romannumeral2} influence the achievable sum rate of IoT terminal device. Specifically, the excitation current weights of PAA and the selection of UAV receiver $U_{k}$ affect the rate of $\mathcal{S}2U_{k}(i)$ link, and the excitation current weights and UAV positions of UVAA have an impact on the rate of $\mathcal{C}2\mathcal{D}(i)$ link.
	\end{remark}
	
	\par \textbf{\emph{Optimization objective 2}:} Minimizing the achievable sum rate of $\mathcal{E}$ is considered as the second optimization objective, and the corresponding objective function is designed as
	\begin{equation}
		f_{2}(\mathbb{I}^{\mathrm{MN} \times \mathrm{T}}_{\mathrm{P}}, \mathbb{S}^{1 \times \mathrm{T}}_{\mathrm{P}}, \mathbb{I}^{\mathrm{K} \times \mathrm{T}}_{\mathrm{U}}, \mathbb{P}^{\mathrm{K} \times \mathrm{T}}_\mathrm{U}) = \sum_{i = 1}^{T} R_{\mathcal{E}}(i). 
		\label{Equ:Optimization objective 2}
	\end{equation}
	
	\begin{remark}
		The achievable sum rate of $\mathcal{E}$ is closely related to three phases of the relay communication process. Specifically, the excitation current weights of PAA affect the rate of $\mathcal{S}2\mathcal{E}(i)$ link, the selection of UAV receiver $U_{k}$ has an impact on the rate of $U_{k}2\mathcal{E}(i)$ link, and the excitation current weights and UAV positions of UVAA influence the rate of $\mathcal{C}2\mathcal{E}(i)$ link.
	\end{remark}
	
	\par \textbf{\emph{Optimization objective 3}:} The third optimization objective is to minimize the energy consumption of UAV swarm, and this optimization objective is relevant to both the user association order\footnote{User association order is defined as the serving order of the MBS to IoT terminal devices.} of remote IoT terminal devices and the positions of UAV swarm. Thus, the specific objective function is designed as
	\begin{equation}
		f_{3}(\mathbb{P}^{\mathrm{K} \times \mathrm{T}}_{\mathrm{C}}, \mathbb{O}^{1 \times \mathrm{T}}) = \sum_{i=1}^{T}\sum_{k=1}^{K}E_{k}(i), 
		\label{Equ:Optimization objective 3}
	\end{equation}
	
	\noindent where $E_{k}(i)$ represents the motion energy consumption of $k$th UAV for communicating in the $i$th communication process. Moreover, trajectory design style, speed control strategy and the adopted model of each UAV are the same as \cite{Li2021}.
	
	\begin{remark}
		The hovering energy consumption of UAVs is not taken into account since it is positively correlated with the hovering time \cite{Babu2021}. Furthermore, the hovering time is related to the communication rate and data transfer volume. Apparently, the communication rate has been considered in optimization objective 1, and data transfer volume for each associated IoT terminal device is decided by user behaviors. Thus, the hovering energy consumption of UAVs is not necessary to be optimized separately.
	\end{remark}
	
	\par In summary, considering the three optimization objectives mentioned above, the US$^2$RMOP can be formulated as follows.
	\begin{equation}
		\label{UASCMOP}
		\begin{aligned}
			\textbf{P1}:\ & \underset{\{\mathbb{X}\}}{\text{min}}\  F=\{-f_1,f_2,f_3\}, \\
			\text{s.t.}\ &C1:0 \leq I_{m,i}^{P} \leq 1, \forall m \in \{1,...,M \times N\}, \forall i \in \{1,..., T\},\\
			&C2:1 \leq S_{i} \leq K, \forall i \in \{1,...,T\}, \\
			&C3:0 \leq I_{k,i}^{U} \leq 1, \forall k \in \{1,...,K\}, \forall i \in \{1,...,T\}, \\
			&C4:X_{min} \leq X_{k}^{U} \leq X_{max}, \forall k \in \{1,..., K\},\\
			&C5:Y_{min} \leq Y_{k}^{U} \leq Y_{max}, \forall k \in \{1,..., K\},\\
			&C6:Z_{min} \leq Z_{k}^{U} \leq Z_{max}, \forall k\in \{1,..., K\},\\
			&C7:1 \leq O_{i} \leq T, \forall i \in \{1,...,T\}, \\
			&C8:O_{i_{1}} \neq O_{i_{2}}, \forall i_{1} \neq i_{2}, \\
			&C9:\Vert P_{k_{1},i}, P_{k_{2},i} \Vert \ge D_{min}^{U}, \forall k_{1}, k_{2} \in \{1,...,K\},
		\end{aligned}
		\nonumber
	\end{equation}
	
	\noindent where $C1$ and $C3$ indicate the range of excitation current weights of PAA and UVAA, $C2$ denotes the selection range of UAV receiver, and the 3D movement area of the UAV is limited by $C4$, $C5$ and $C6$, respectively. Moreover, $C7$ and $C8$ ensure the service fairness for each associated IoT terminal device. Moreover, the collision constraint between UAVs is expressed as $C9$ where $\Vert P_{k_1,i}, P_{k_2,i} \Vert$ represents the distance between the $k_{1}$th UAV and the $k_{2}$th UAV for serving the $i$th associated IoT terminal device.
	
	%
	%
	\subsection{Problem Analysis}
	\label{SubSection:Problem Analysis}
	
	\par In this section, the formulated US$^2$RMOP is analyzed.
	
	\begin{proposition}
		The US$^2$RMOP is an NP-hard and non-convex optimization problem.
	\end{proposition}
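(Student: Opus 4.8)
The plan is to establish the two properties separately, since they are logically independent. For non-convexity, the cleanest route is to isolate a single offending ingredient rather than to analyze $F$ in full generality. First I would argue that the feasible region itself is non-convex: the collision-avoidance constraint $C9$ forces $\Vert P_{k_{1},i}, P_{k_{2},i} \Vert \ge D_{min}^{U}$ for every pair of UAVs, and for each pair the admissible configurations form the complement of an open ball, which is a standard non-convex set. Concretely, I would exhibit two feasible UAV placements whose midpoint brings some pair closer than $D_{min}^{U}$ and hence violates $C9$, so the feasible set fails closure under convex combination. As an independent certificate on the objective side, I would note that the array gains $G_{0}^{P}$ and $G_{0}^{U}$ in Eqs.~\eqref{Equ:the array gain torwards UAV k} and \eqref{Equ:the array gain torwards IoT i} depend on the UAV positions through $\lvert AF_{U}(\cdot)\rvert^{2}$, i.e.\ through the squared magnitude of a sum of complex exponentials in $\overline{x^{U}_{k}},\overline{y^{U}_{k}},\overline{z^{U}_{k}}$; this yields an oscillatory, multimodal dependence of $R_{\mathcal{S}2\mathcal{D}}$ and $R_{\mathcal{E}}$ on the decision variables, and I would pin down a one-dimensional slice (varying a single coordinate) along which the rate is non-concave. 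The non-smooth $\widehat{\min}$ operator in Eq.~\eqref{Equ:the rate between BS and IoT i} can be cited as further evidence but is not needed once either certificate above is in hand.

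For NP-hardness I would build a polynomial-time reduction from the (metric) Travelling Salesman Problem, which is the combinatorial skeleton hidden in the user-association order. The idea is to restrict US$^2$RMOP to a sub-instance in which all continuous variables are frozen at feasible values: fix the excitation current weights ($C1$, $C3$) and the receiver selection ($C2$) arbitrarily, and assign to each IoT device $i$ a fixed admissible UAV-swarm service position so that $f_{1}$ and $f_{2}$ become constants. The only remaining freedom is the serving permutation $\mathbb{O}$, which by $C7$ and $C8$ ranges over all permutations of $\{1,\dots,T\}$, and $f_{3}$ collapses to the total propulsion energy incurred by moving the swarm through the service positions in the order dictated by $\mathbb{O}$. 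Using the energy model of Eqs.~\eqref{Equ:the power estimated about UAV}--\eqref{Equ:the energy consumption about UAV}, the energy of each leg is a monotone function of the inter-position displacement, so minimizing $f_{3}$ over $\mathbb{O}$ is exactly the problem of finding a minimum-cost Hamiltonian path through a prescribed point set. I would then show that an arbitrary metric TSP instance embeds by placing the service positions at the cities' coordinates, so a polynomial-time solver for the restricted US$^2$RMOP would solve TSP; NP-hardness of US$^2$RMOP follows.

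The main obstacle I anticipate lies in the reduction, not in the non-convexity argument. Two points require care. First, I must verify that the per-leg propulsion energy, as a function of displacement, is \emph{monotone and rich enough} to realize the edge-cost matrix of a genuinely NP-hard TSP family: the model in Eq.~\eqref{Equ:the power estimated about UAV} is speed-dependent, so I would fix the speed-control and trajectory-design conventions (as inherited from \cite{Li2021}) to make each leg cost a clean increasing function of distance, and then confine the embedding to metric TSP so that the reduction preserves hardness. Second, because US$^2$RMOP is multi-objective, I must frame the reduction against a single, well-posed decision problem---namely the threshold version associated with $f_{3}$ once $f_{1},f_{2}$ are held constant---so that ``solving'' the restriction unambiguously means optimizing a scalar objective; handling the Pareto formulation cleanly, rather than through an ad hoc weighted scalarization, is the delicate bookkeeping step.
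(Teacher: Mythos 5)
Your route is genuinely different from the paper's. The paper argues by restriction and classification: it drops $f_2$ and $f_3$, takes $T=1$, observes that the resulting single-objective problem \textbf{P2} is a mixed-integer nonlinear program (continuous currents and positions coupled with the integer receiver selection $\mathbb{S}_{\mathrm{P}}$), and cites the NP-hardness and non-convexity of MINLP as a class, concluding that \textbf{P1} is at least as hard. Your non-convexity certificate is sharper than anything in the paper: the midpoint argument against $C9$ (two feasible placements, e.g.\ exchanging two UAVs' positions, whose convex combination brings some pair within $D_{min}^{U}$) is a complete, elementary proof that the feasible set is non-convex, and the oscillatory dependence of the array gain on UAV positions gives an independent objective-side certificate. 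That half of your proposal is correct and in fact more rigorous than the paper's treatment, since membership in the MINLP class does not by itself establish non-convexity of a particular instance family.

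The genuine gap is in your NP-hardness reduction. A reduction must map TSP instances to \emph{instances} of US$^2$RMOP, and the instance data here are the MBS/device/eavesdropper positions and the system parameters; the UAV service positions $\mathbb{P}_{\mathrm{U}}$, the excitation currents, and the receiver selection are \emph{decision variables}, controlled by the solver and not by you. You are not entitled to ``freeze'' them: hardness of a problem obtained by fixing decision variables does not imply hardness of the original problem, because fixing variables changes the problem rather than selecting a subclass of instances. Worse, once all variables are left free, as they must be, your restricted problem is no longer TSP at all: minimizing $f_3$ in isolation is solved trivially by a swarm that never moves ($E_{k}(i)=0$ for all $k$ and $i$, independently of $\mathbb{O}$), so the permutation carries no cost. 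The TSP structure only emerges through the coupling of $f_3$ with $f_1$: a high sum rate forces the swarm to be near each served device, and only then does the serving order govern the travel energy. The standard repair is to pose the decision version natural for multi-objective problems --- does there exist a feasible $\mathbb{X}$ with $f_1 \ge R$ and $f_3 \le E$? --- choose $R$ large enough that any solution attaining it must place the UVAA within a small radius of a designated site for each device, and then show that achievability is equivalent to the existence of a Hamiltonian ordering of those sites with travel cost below a threshold. You half-recognize this in your closing caveat, but your proposed fix (``the threshold version associated with $f_3$ once $f_1,f_2$ are held constant'') still presupposes the freezing that is illegitimate, so as written the reduction does not go through.
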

	\begin{proof}
		To simplify this discussion, we only consider the objective $f_{1}$ without $f_{2}$ and $f_3$ in the context of serving a single IoT terminal device. The US$^2$RMOP can be reduced as follow:
		
		\begin{equation}
			\label{Equ:Reduce1}
			\begin{aligned}
				\textbf{P2}:\ &\underset{\{\mathbb{X}_1\}}{\text{min}}\ -f_1, \\
				&\text{s.t.}\ C1-C6, \text{and}\  C9,
			\end{aligned}
			\nonumber
		\end{equation}
		
		\noindent where $\mathbb{X}_1 = (\mathbb{I}^{\mathrm{MN}}_{\mathrm{P}}, \mathbb{S}^{1}_{\mathrm{P}}, \mathbb{I}^{\mathrm{K}}_{\mathrm{C}}, \mathbb{P}^{\mathrm{K}}_\mathrm{C})$ is the part of decision variable $\mathbb{X}$. As can be seen, $\textbf{P2}$ can be specified as a Mixed-Integer Nonlinear Programming (MINLP) problem, which is a typical NP-hard and non-convex optimization problem \cite{Burer2012}. Clearly, \textbf{P1} is more difficult to be solved than \textbf{P2} since it adds the coupling to the optimization objectives $f_2$ and $f_3$. Thus, the formulated US$^2$RMOP is an NP-hard and non-convex optimization problem.
	\end{proof}
	
	\begin{proposition}
		The formulated US$^2$RMOP is a large-scale optimization problem.
	\end{proposition}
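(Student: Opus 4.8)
The plan is to prove the claim by a direct dimension count: I would express the total number of scalar decision variables in $\mathbb{X}$ as a function of the system parameters $M$, $N$, $K$, and $T$, and then compare it against the conventional threshold that separates ordinary from large-scale optimization problems. In the metaheuristic and evolutionary optimization literature, a problem is classified as \emph{large-scale} once its decision space dimension exceeds a fixed threshold (commonly on the order of $100$ variables). Thus the entire argument reduces to a counting exercise followed by a single inequality.

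First I would decompose $\mathbb{X} = (\mathbb{I}^{\mathrm{MN} \times \mathrm{T}}_{\mathrm{P}}, \mathbb{S}^{1 \times \mathrm{T}}_{\mathrm{P}}, \mathbb{I}^{\mathrm{K} \times \mathrm{T}}_{\mathrm{U}}, \mathbb{P}^{\mathrm{K} \times \mathrm{T}}_\mathrm{U}, \mathbb{O}^{1 \times \mathrm{T}}_{\mathrm{U}})$ into its five blocks and tally the scalar variables contributed by each across all $T$ communication processes. The PAA excitation weights supply $M N T$ continuous variables; the UAV-receiver selection supplies $T$ integer variables; the UVAA excitation weights supply $K T$ continuous variables; the UAV positions supply $3 K T$ continuous variables, since each position $P_{k,i}$ is a point in $\mathbb{R}^3$ constrained through $C4$--$C6$; and the association order supplies $T$ integer variables. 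Summing these blocks gives the total dimension
\begin{equation}
    D = M N T + 3 K T + K T + 2 T = M N T + 4 K T + 2 T.
    \nonumber
\end{equation}

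Next I would observe that $D$ scales multiplicatively in the array size $M N$ and the swarm size $K$ jointly with the number of served devices $T$, so even modest deployments push it far beyond the threshold. For example, a $4 \times 4$ PAA with a swarm of $K = 8$ UAVs serving $T = 5$ devices already yields $D = 80 + 160 + 10 = 250$, and larger arrays or swarms raise this figure without bound. Since the instances considered in this work satisfy these or larger parameter settings, $D$ comfortably exceeds the large-scale threshold, which establishes the claim.

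The argument carries no genuine difficulty; the only point demanding care is the bookkeeping of the dimension count. Specifically, one must remember to include the factor of $3$ arising from the three Cartesian coordinates of each UAV position in $\mathbb{P}_{\mathrm{U}}$, and to replicate every block across all $T$ communication rounds rather than counting a single round. Once these two multiplicities are handled correctly, the inequality $D \gg 100$ follows immediately for the deployment scales treated here.
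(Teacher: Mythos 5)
Your proof takes essentially the same approach as the paper: a direct count of the decision-space dimension, and your total $MNT + 4KT + 2T$ agrees exactly with the paper's stated dimension $\left(M \times N + 2 + 4K\right) \times T$, including the correct split of the UAV block into $3KT$ position coordinates and $KT$ excitation weights. The paper concludes simply by noting that this dimension grows with the numbers of PAA elements, UAVs, and IoT devices, whereas you additionally invoke an explicit large-scale threshold and a numerical instance --- a harmless elaboration of the same counting argument.
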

	\begin{proof}
		The solution space of US$^2$RMOP is composed of the excitation current weights $\mathbb{I}_{\mathrm{P}}$, the selection of UAV receiver $\mathbb{S}_{\mathrm{P}}$, the excitation current weight distribution of UVAA $\mathbb{I}_{\mathrm{U}}$, the position of UAV swarm $\mathbb{P}_{\mathrm{U}}$ and the associated order of remote IoT terminal devices $\mathbb{O}$. Thus, the decision space dimension of US$^2$RMOP is $\left(\left(M \times N + 2 + 4 \times K\right) \times T\right)$. As the numbers of PAA elements, UAVs and IoT terminal devices increase, the decision space dimension will increase accordingly. On this basis, the formulated US$^2$RMOP is a large-scale optimization problem \cite{Cao2020}.
	\end{proof}
	
	\begin{table*}[htbp]    
		\renewcommand\arraystretch{2}
		\centering 
		\caption{{Description of Variables}}  
		\label{Table:Statement of Variables}  
		\begin{tabular}{l l l l}  \toprule[1.5pt]
			\multicolumn{1}{m{1.5cm}}{Variable Set}&                  
			\multicolumn{1}{m{5.5cm}}{Variable Element}&              
			\multicolumn{1}{m{5cm}}{Physical Meaning}&         
			\multicolumn{1}{m{4.5cm}}{A Case in Point}\\ 
			\toprule[1pt]
			
			\multicolumn{1}{m{1.5cm}}{$\mathbb{I}^{\mathrm{MN} \times \mathrm{T}}_{\mathrm{P}}$}& \multicolumn{1}{m{5.5cm}}{$\{I_{m,i}^{P} |  m \in \{1,...,M \times N\}, i \in \{1,...,T\}\}$}&   
			\multicolumn{1}{m{5cm}}{$\mathbb{I}^{\mathrm{MN} \times \mathrm{T}}_{\mathrm{P}}$ represents the excitation current weights of PAA, while $I_{m,i}^{P}$ is the excitation current weight of the $m$th row array element for serving the $i$th associated IoT terminal device in PAA.}&    
			\multicolumn{1}{m{4.5cm}}{ $I_{1,1}^{P}=1.0$ means the excitation current weight of the first array element is 1.0 in PAA for serving the first associated IoT terminal device.}\\ 
			\hline
			
			\multicolumn{1}{m{1.5cm}}{$\mathbb{S}^{1 \times \mathrm{T}}_{\mathrm{P}}$}& 
			\multicolumn{1}{m{5.5cm}}{$\{S_{i} | i\in \{1,...,T\}\}$} & 
			\multicolumn{1}{m{5cm}}{$\mathbb{S}^{1 \times \mathrm{T}}_{\mathrm{P}}$ represents the selection of UAV receiver, while $S_{i}$ is the receiver of PAA for serving the $i$th associated IoT terminal device .}&   
			\multicolumn{1}{m{4.5cm}}{$S_{1} = 1$ means the first UAV is scheduled to acting as a receiver of PAA for serving the first associated IoT terminal device.} \\ 
			\hline
			
			\multicolumn{1}{m{1.5cm}}{$\mathbb{I}^{\mathrm{K} \times \mathrm{T}}_{\mathrm{U}}$}& 
			\multicolumn{1}{m{5.5cm}}{$\{I_{k,i}^{{U}} |k \in \{1,...,K\},i \in \{1,...,T\} \}$}&
			\multicolumn{1}{m{5cm}}{$\mathbb{I}^{\mathrm{K} \times \mathrm{T}}_{\mathrm{U}}$ represents the excitation current weight of UVAA, while $I_{k,i}^{U}$ is the excitation current weight of the $k$th UAV in UVAA for serving $i$th associated IoT terminal device.}&
			\multicolumn{1}{m{4.5cm}}{$I_{1,1}^{U}=1.0$ means the excitation current weight of the first UAV element is 1.0 in UVAA for serving the first associated IoT terminal device.}\\ \hline
			
			\multicolumn{1}{m{1.5cm}}{$\mathbb{P}^{\mathrm{K} \times \mathrm{T}}_{\mathrm{U}}$}&  
			\multicolumn{1}{m{5.5cm}}{\{$P_{k,i} |k \in \{1,..., K\}, i \in \{1,..., T\}$\}}&  
			\multicolumn{1}{m{5cm}} {$\mathbb{P}^{\mathrm{K} \times \mathrm{T}}_{\mathrm{U}}$ represents the position of UAV swarm, while $P_{k,i}$ is the position of the $k$th UAV for serving the $i$th associated IoT terminal device.}& 
			\multicolumn{1}{m{4.5cm}}{$P_{1,1}=(300,300,100)$ means the position of the first UAV for serving the first associated IoT terminal device is $(300,300,100)$.}\\ 
			\hline
			
			\multicolumn{1}{m{1.5cm}}{$\mathbb{O}^{1 \times \mathrm{T}}$}& 
			\multicolumn{1}{m{5.5cm}}{\{$O_{i}| i \in \{1,...,T\}$\}}  & 
			\multicolumn{1}{m{5cm}}{$\mathbb{O}^{1 \times \mathrm{T}}$ represents the association order of IoT terminal devices, while $O_{i}$ is the ID of the $i$th associated IoT terminal device.}   & 
			\multicolumn{1}{m{4.5cm}}{$O_{1}=1$ means UVAA first serves the IoT terminal device with ID 1.}\\
			\toprule[1.5pt]
		\end{tabular}  
	\end{table*}
	
	\section{Proposed Algorithm} 
	\label{Section:Proposed Algorithm}
	\par The approaches to address the formulated US$^2$MOP can be broadly categorized into three groups, i.e., convex optimization methods, reinforcement learning and evolutionary algorithms (EA). Specifically, due to involving complex constraints, solving US$^2$MOP through relaxation and duality using convex optimization techniques is difficult. Likewise, the curse of dimensionality can impact reinforcement learning due to the presence of a large number of decision variables, resulting in extensive state and action spaces. 
	\par EA are classical stochastic search methods that simulate the natural selection and evolution of creatures. Compared to the other two categories of algorithms analyzed above, EA have strong robustness, global search capability, and adaptability, making them effective for solving the non-convex, NP-hard and larger-scale optimization problems. Among EA, the performances of grasshopper optimization algorithm (GOA) and its corresponding multi-objective grasshopper optimization algorithm (MOGOA) \cite{Mirjalili2018} are effective and they have been applied for solving problems in different areas such as the financial stress prediction \cite{Luo2018}, trajectory optimization \cite{Wu2017a}, and training neural network\cite{Heidari2019}, etc. Thus, we intend to use them as basic algorithm frameworks to deal with the formulated US$^2$RMOP.
	
	%
	%
	\subsection{Conventional GOA and MOGOA} 
	\label{SubSection:Conventional GOA and MOGOA}
	
	\par GOA is enlightened by the behavior of the grasshopper swarm, where the position of each individual grasshopper in the population stands for a feasible solution to the given optimization problem. As shown in Fig.~\ref{Figure:GOA}, grasshoppers exhibit interactive behavior consisting of both attractive and repulsive forces. Mathematically, the resultant force is expressed as $ s(r) =  fe^{\frac{-r}{l}}-e^{-r}$, where $f$ and $l$ is the intensity of attraction and the attractive length scale, respectively. Specifically, the solution update strategy is described as
	\begin{equation}
		\label{Equ:MODOA_update}
		x_i^{d} = c \left(\sum_{j = 1, j \neq i}^{N_{pop}} c \frac{ub_{d} - lb_{d}}{2}s\left(\left|x_j^d - x_i^d\right|\frac{x_j - x_i}{d_{ij}}\right) \right) + T_d,
	\end{equation}
	
	\noindent where $x_i^d$ is the $d$th dimension of the $i$th grasshopper, $N_{pop}$ represents the population size, $d_{ij}$ denotes the distance between the $i$th and $j$th grasshoppers, and $ub_{d}$ and $lb_d$ are the upper and lower boundaries of $d$th dimension, respectively. Moreover, $T_d$ is the value of the best solution found so far in the $d$th dimension. Notably, $c$ is the linear decreasing coefficient to control the size of a comfort zone, which can be expressed as
	\begin{equation}
		\label{Equ:Parameter c}
		c = c_{max} - iter \times \frac{c_{max} - c_{min}}{iter_{max}},
	\end{equation}
	
	\noindent where $c_{max}$ and $c_{min}$ are the maximum and minimum values, respectively, $iter$ represents the current iteration, and $iter_{max}$ is the maximum number of iterations. In Eq.~\eqref{Equ:MODOA_update}, the inner parameter $c$ decreases the attractive/repulsive forces between grasshoppers proportionally to the iteration, whereas the outer parameter $c$ diminishes the search area around the objective with the increasing of iterations.
	
	\begin{figure}[!t]
		\centering
		\includegraphics[width=0.8\linewidth,scale=1.00]{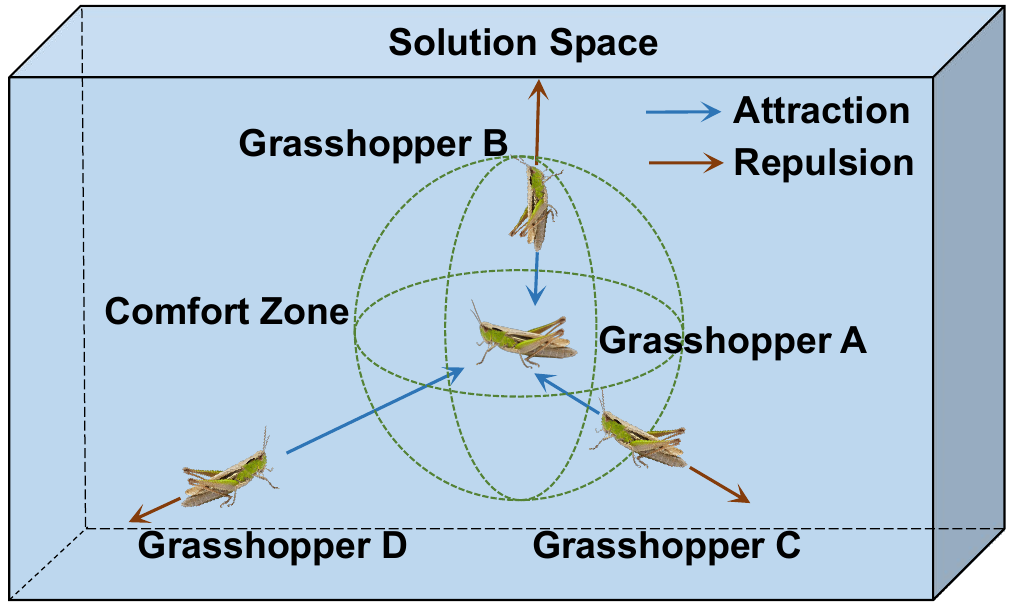}
		\caption{An illustrative example about interaction behavior between grasshoppers, where the resultant force exerted by grasshopper A through two distinct forces on other grasshoppers is categorized into three types as follows. (i) attraction force $>$ repulsion force: Overall effect is attraction, e.g., grasshoppers A to D. (ii) attraction force $=$ repulsion force: Overall effect is neither attraction nor repulsion under the condition of comfort zone, e.g., grasshoppers A to C. (iii) attraction force $<$ repulsion force: Overall effect is repulsion, e.g., grasshoppers A to B.}
		\label{Figure:GOA}
	\end{figure}
	
	\par As illustrated in Fig.~\ref{Figure:MOGOA}, the MOGOA employs archive, crowded neighborhood, and roulette wheel selection to address multi-objective optimization problems in an effective manner. These adaptations allow for better management of the search space, resulting in more optimal solutions. Specifically, the archive stores the best non-domination solutions found so far, while the crowded neighborhood prevents overcrowding of these solutions. Moreover, the roulette wheel selection is used to probabilistically select the target grasshopper for update population, ensuring diversity and convergence in the population. 
	
	\begin{figure*}[htbp]
		\centering
		\includegraphics[width=\linewidth,scale=1.00]{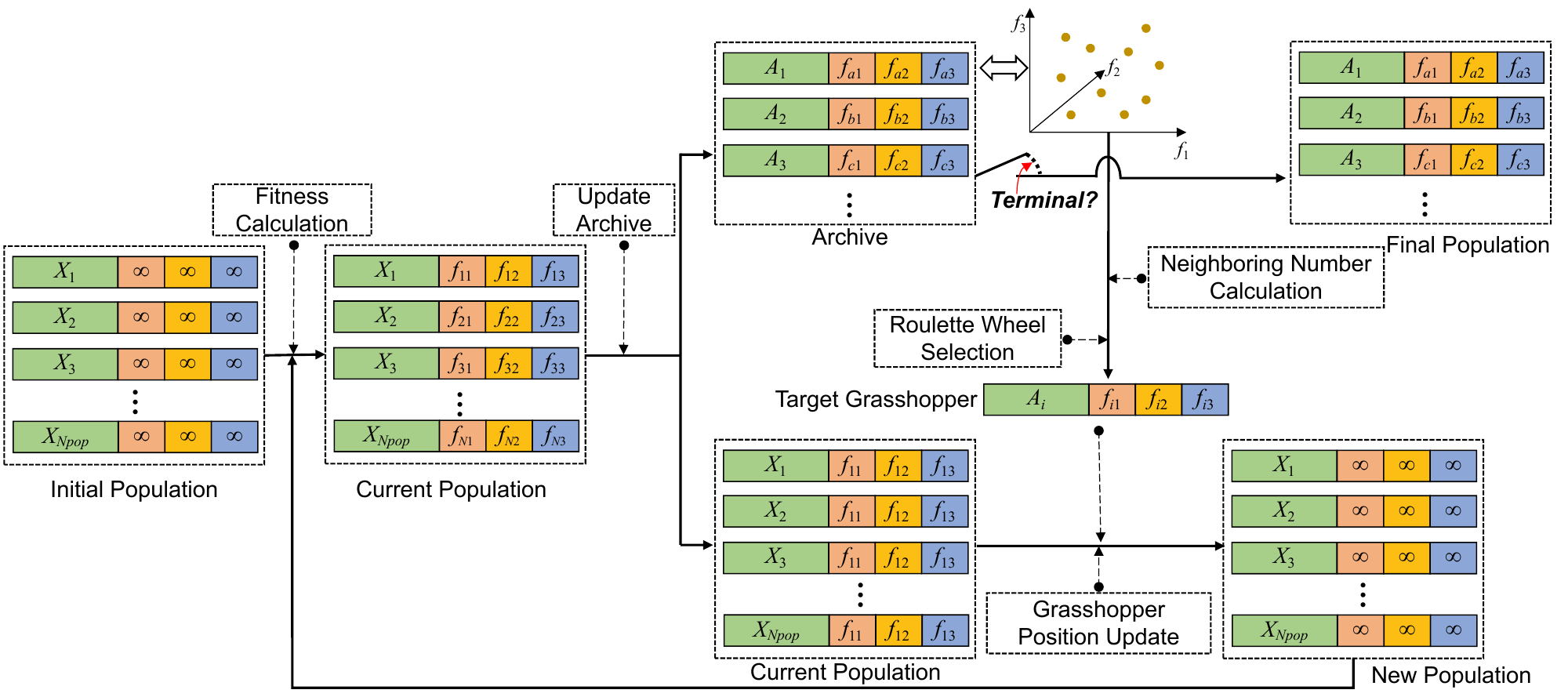}
		\caption{The framework of MOGOA, where the rectangles filled in light green and other colors represent solution and objective values part of population, respectively. Moreover, the dashed arrows with solid dots at the end indicate the operators applied to the populations.}
		\label{Figure:MOGOA}
	\end{figure*}
	
	\par However, the traditional MOGOA faces many challenges for solving the formulated US$^2$RMOP due to the following reasons. 
	
	\begin{itemize}
		\item The mixture of continuous and discrete solution spaces caused by the presence of discrete part ($\mathbb{S}_{\mathrm{P}}$, $\mathbb{O}$) cannot be addressed by conventional MOGOA. 
		\item The probability of finding the global optimal solution is reduced by the random initialization of MOGOA. 
		\item The relationship between exploration and exploitation in a large-scale solution space cannot be efficiently balanced by linear decreasing coefficient $c$. 
	\end{itemize}
	
	\par Thus, we propose the IMOGOA to improve the adaptability of MOGOA for solving the formulated US$^2$RMOP, and the details are as follows.
	
	%
	%
	\subsection{IMOGOA} 
	\label{SubSection:IMOGOA}
	
	\par In this section, IMOGOA with several improvements is presented for solving US$^2$RMOP. Specifically, IMOGOA can achieve a better performance with special designs about the population initialization, solution update and archive update. The general framework of IMOGOA is shown in Algorithm \ref{Algorithm 1}, and the improved strategies are described in detail as follows.
	
	\begin{figure}[!t]
		\removelatexerror
		\begin{algorithm}[H]
			\caption{IMOGOA}
			\label{Algorithm 1}
			\LinesNumbered
			\KwIn{ iteration number $iter_{max}$, population size $N_{pop}$;}
			\KwOut{the final archive $Archive$;}
			$Archive$ $\gets$ $\varnothing$, $P_0$ $\gets$ $\varnothing$;\\
			Initialize Population $P_0$ by using \textbf{H$^3$C Strategy};\\
			Calculate the fitness value of $P_0$ and filtering the non-dominated set $S_{0}$;\\
			$Archive \gets S_0 \cup Archive$;\\
			Update $Archive$ using \textbf{Algorithm 3};\\
			\For{$i = 1$ to $iter_{max}$}
			{  
				Select a grasshopper in $Archive$ by roulette wheel as target grasshopper position $\mathbb{X}_{target}$;\\
				\For{$j = 1$ to $N_{pop}$}
				{  
					Update the solution of the $j$th grasshopper using \textbf{Algorithm 2};\\
				}
				Calculate the fitness value of current population;\\
				Update $Archive$ using \textbf{Algorithm 3};\\
			}	
			Return $Archive$;
		\end{algorithm}
	\end{figure}
	
	\subsubsection{Population Initialization} 
	\label{SubSubSection:Population Initialization}
	
	\par Random initialization of the population in the traditional MOGOA may reduce the probability of finding the global optima. In response to this, we adopt a half-Halton-half-chaos (H$^{3}$C) strategy to enhance diversity of initial population, and the generation method can be described as follow:
	\begin{equation}
		\label{Equ:Population intialization}
		x_{i}^{d} = \left\{
		\begin{array}{ll}
			h_{i,d}\times (ub_{d} - lb_{d}) + lb_{d}, & i \leq N_{pop} / 2 \\
			c_{i,d}\times(ub_{d} - lb_{d}) + lb_{d}, & otherwise
		\end{array}
		\right 
		.,
	\end{equation}
	
	\noindent where $h_{i,d}$ and $c_{i,d}$ are the sequences between 0 and 1 which are generated by the Halton method and chaotic map, respectively. Specifically, the calculation process for Halton sequence can be expressed as 
	\begin{equation}
		\label{Equ:Halton Sequence2}
		h_{i,d}=\left(\sum_{l=0}^{L}b_{l}(i)(p_{d})^{-l-1}\right),
	\end{equation}
	
	\noindent where $p_{d}$ is a random prime number in the range of 0 to 10, and the value $b_{0}(i), b_{1}(i), \dots, b_{l}(i)$ needs to satisfy the condition, i.e., $ \sum_{l=0}^{L}b_{l}(i)(p_{d})^{l} = i$. Accordingly, the calculation process for a chaotic sequence can be expressed as
	\begin{equation}
		\label{Equ:Chaos}
		c_{i,d}=\text{mod}\left(c_{i, d - 1} + b - \left(a/2\pi\right) \times \text{sin}(2\pi c_{i,d-1}), 1\right),
	\end{equation}
	
	\noindent where $a$ and $b$ are two constant variables. Moreover, $\mod(\cdot)$ is a mathematical division operator.
	
	\subsubsection{Non-linear Decreasing Coefficient} 
	\label{SubSubSection:Non-linear Decreasing Coefficient}
	
	\par To increase the probability of finding the globally optimal solution, IMOGOA must balance the process of exploring new solution space and exploiting known information. Relying too heavily on known information may trap the algorithm in a local optimal solution and prevent it from finding the global optimal solution. Conversely, relying too much on random searches may cause the algorithm to waste time exploring low-quality solution spaces. To address this issue, a non-linear decreasing coefficient is adopted and expressed as
	\begin{equation}
		\label{Equ:Non-linear decreasing coefficient}
		c = c_{max} - c_{min} - \text{sin}\left(\frac{1}{2} \times \pi \times \left(\frac{iter}{iter_{max}}\right)^{\frac{1}{2}}\right).
	\end{equation}
	
	\begin{figure}[!h]
		\includegraphics[width=\linewidth,scale=1.0]{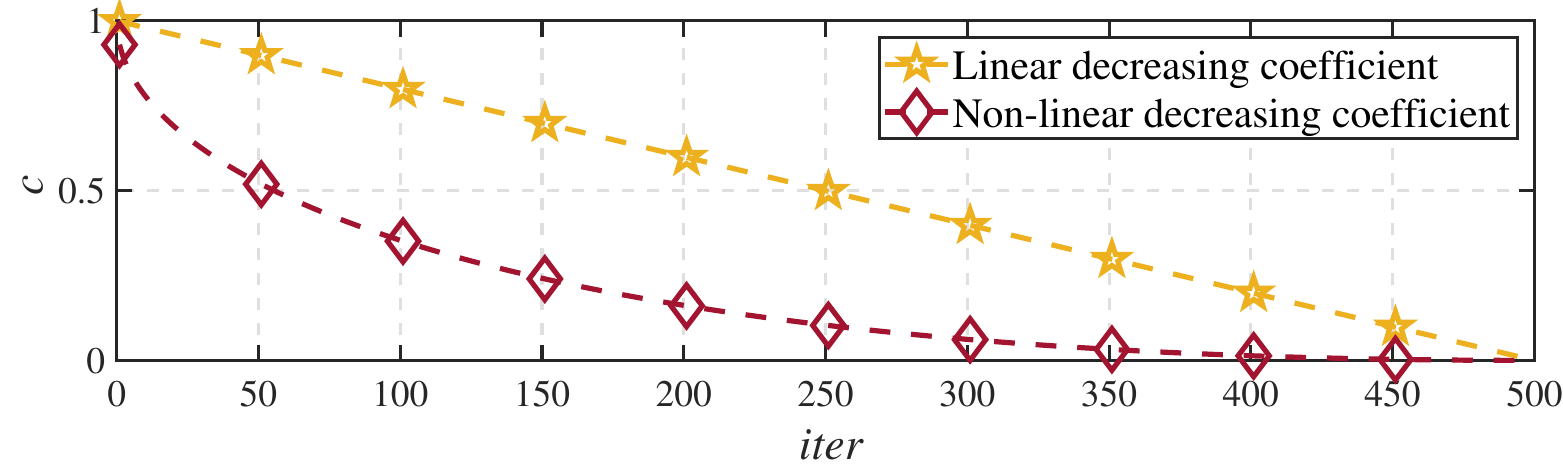}
		\caption{Comparison between non-linear and linear decreasing coefficients. We set $c_{max}=1$ and $c_{min}=0.0004$.}
		\label{Figure:Non-linear decreasing coefficient}
	\end{figure}
	
	\par The non-linear decreasing coefficient, as illustrated in Fig.~\ref{Figure:Non-linear decreasing coefficient}, enables the algorithm to prioritize exploring unknown solution space in the early stages of the search. Over time, the algorithm's ability to exploit known information is gradually strengthened, leading to better balance and higher quality solutions. By enabling a better balance between exploring new solution spaces and exploiting known information, this approach helps IMOGOA find high-quality solutions in a shorter timeframe.
	
	\subsubsection{Solution Update} 
	\label{SubSubSection:Solution update}
	
	\par The solution update of population is an important step for efficiently searching the solution space in IMOGOA. Since the formulated US$^2$RMOP has both continuous solution part $(\mathbb{I}_{\mathrm{P}},\mathbb{I}_{\mathrm{V}},\mathbb{P}_{\mathrm{C}})$ and discrete solution part $(\mathbb{S}_{\mathrm{P}}$, $\mathbb{O})$, it is necessary to consider these two update operators separately. 
	
	\par \textbf{\textit{For the continuous part of solution:}} Although the solution can be updated directly by using conventional MOGOA, it can easily become trapped in local optima. To improve the search capability in the large-scale solution space, we employ L\'evy flight in the process of solution update. Specifically, the continuous solution update method can be expressed as
	\begin{equation}
		\label{Equ:Continuous solution update}
		\mathbb{X}_{i}^{CNew} = \mathbb{X}_{i}^{CGOA} + \alpha_{1} \otimes L\acute{e}vy(\beta),
	\end{equation}
	
	\noindent where $\mathbb{X}_{i}^{CGOA}$ is the continuous solution obtained by the traditional MOGOA after introducing the nonlinear decreasing coefficient, and $\alpha_{1}$ is the step size scaling factor. Moreover, $\otimes$ is a mathematical
	Hadamard product operator, and $L\acute{e}vy(\beta)$ is the L\'evy random search path, which is calculated as follow:
	\begin{equation}
		\label{Equ:Levy Flight}
		L\acute{e}vy(\beta) = \frac{\mu}{\lvert w \rvert^{-\beta}} ,
	\end{equation}
	
	\noindent where $\mu$ is the normal distribution matrix with mean 0 and variance $\sigma_{u}^{2}$, wherein $\sigma_{u} = \left[\frac{\Gamma(1+\beta) \sin \left(\frac{\pi \beta}{2}\right)}{\Gamma\left(\frac{1+\beta}{2}\right) \beta \times 2^{\frac{\beta-1}{2}}}\right]^{\frac{1}{\beta}}$. Besides, $w$ is the normal distribution matrix with mean 0 and variance $1$.
	
	\par \textbf{\textit{For the discrete part of solution:}} Since the conventional MOGOA cannot handle discrete solution spaces, we design the update process in detail because of the presence of discrete part $(\mathbb{S}_{\mathrm{P}}$, $\mathbb{O})$. Specifically, since the selection order of the UAV receiver denoted by $\mathbb{S}_{\mathrm{P}}$ can be duplicated and the user association order of IoT terminal devices denoted by $\mathbb{O}$ is non-duplicated, the two-points crossover (TPC) and partially-matched crossover (PMX) \cite{Goldberg2014} strategies are adopted for $\mathbb{S}_{\mathrm{P}}$ and $\mathbb{O}$, respectively. As can be seen from Fig. \ref{Figure:TPC and PMX strategy.}, the PMX strategy appends conflict remove part for eliminating the duplicate elements compared to the TPC strategy. Accordingly, the detailed update is expressed as follow:
	\begin{equation}
		\label{Equ:Discrete Solution Update}
		\left\{\begin{array}{ll}
			\left[\mathbb{X}_{i}^{SNew1}, \mathbb{X}_{i}^{SNew2}\right] = \text{TPC}\left(\mathbb{X}_{i}^{SCur}, \mathbb{X}_{TS}\right),& for \  \mathbb{S}_{\mathrm{P}} \\
			\left[\mathbb{X}_{i}^{ONew1}, \mathbb{X}_{i}^{ONew2}\right] = \text{PMX}\left(\mathbb{X}_{i}^{OCur}, \mathbb{X}_{TO}\right), & for \  \mathbb{O}
		\end{array}\right.,
	\end{equation}
	
	\noindent where $\left(\mathbb{X}_{i}^{SCur},\mathbb{X}_{i}^{OCur} \right)$ and $\left( \mathbb{X}_{TS}, \mathbb{X}_{TO}\right)$ are the discrete parts of the $i$th grasshopper and target grasshopper, respectively. Moreover, $\left(\mathbb{X}_{i}^{SNew1}, \mathbb{X}_{i}^{ONew1}\right)$ and $\left(\mathbb{X}_{i}^{SNew2}, \mathbb{X}_{i}^{ONew2}\right)$ represent the two new offspring generated by the TPC and PMX strategies, respectively.
	
	\par Accordingly, the update strategy of the solution is detailed in \textbf{Algorithm 2}.
	
	\begin{figure}[!t]
		\includegraphics[width=\linewidth,scale=1.00]{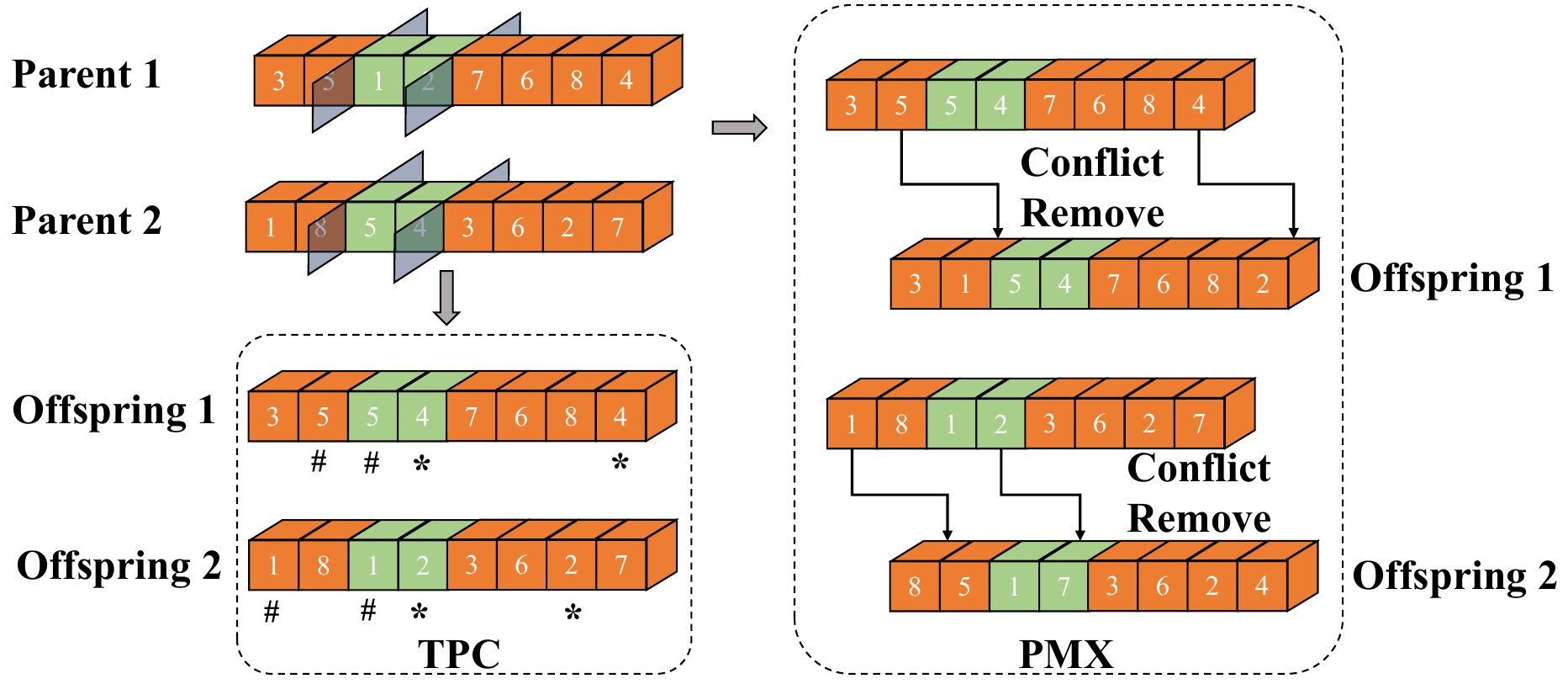}
		\caption{An illustrative example of TPC and PMX strategies, wherein TPC strategy selects randomly two crossover points to execute the exchange of elements, pounds and stars point out the repeated location after performing the TPC strategy. On the basis of TPC, PMX strategy further performs conflict remove operation.}
		\label{Figure:TPC and PMX strategy.}
	\end{figure}
	
	\begin{figure}[!t]
		\removelatexerror
		\begin{algorithm}[H]
			\caption{Solution Update}
			\label{Algorithm 2}
			\KwIn{the current solution of the $i$th grasshopper $\mathbb{X}_{i}$, the current solution of target grasshopper $\mathbb{X}_{target}$;}
			\KwOut{updated solution of the $i$th grasshopper $\mathbb{X}_{New}$;}
			\LinesNumbered
			\tcc{\textbf{Continuous Part Update}}
			Using Eq.~\eqref{Equ:Continuous solution update} update continuous part $\mathbb{X}^{CCur}_{i}=(\mathbb{I}^{Cur}_{\mathrm{P}},\mathbb{I}^{Cur}_{\mathrm{V}},\mathbb{P}^{Cur}_{\mathrm{C}})$ to produce new continuous part $\mathbb{X}^{CNew}_{i}=(\mathbb{I}^{New}_{\mathrm{P}},\mathbb{I}^{New}_{\mathrm{V}},\mathbb{P}^{New}_{\mathrm{C}})$; \\
			\tcc{\textbf{Discrete Part Update}}
			Using Eq.~\eqref{Equ:Discrete Solution Update} update discrete part $\mathbb{X}_{i}^{DCur}=(\mathbb{S}_{\mathrm{P}}^{Cur}$, $\mathbb{O}^{Cur})$ to produce two new discrete offspring parts $(\mathbb{S}_{\mathrm{P}}^{New1}$, $\mathbb{O}^{New1})$ and $(\mathbb{S}_{\mathrm{P}}^{New2}$, $\mathbb{O}^{New2})$; \\
			\tcc{\textbf{Merge and Selection}}
			\textbf{Merge}: Obtain two grasshopper offspring $O_{1}=\left(\mathbb{X}^{CNew}_{i},\mathbb{S}_{\mathrm{P}}^{New1}, \mathbb{O}^{New1} \right)$ and $O_{2}=\left(\mathbb{X}^{CNew}_{i},\mathbb{S}_{\mathrm{P}}^{New2}, \mathbb{O}^{New2} \right)$ by merging continuous and discrete part of solution;\\
			\textbf{Selection}: Calculate the fitness of $O_{1}$ and $O_{2}$;\\
			\eIf{$O_{1}$ dominates $O_{2}$}
			{
				$\mathbb{X}_{New}$ $\gets$ $O_{1}$;
			}
			{
				\eIf{$O_{2}$ dominates $O_{1}$}
				{
					$\mathbb{X}_{New}$ $\gets$ $O_{2}$;
				}
				{
					\eIf{$rand < 0.5$}{
						$\mathbb{X}_{New}$ $\gets$ $O_{1}$;
					}{
						$\mathbb{X}_{New}$ $\gets$ $O_{2}$;
					}
				}            
			}
			Return $\mathbb{X}_{New}$;
		\end{algorithm}
	\end{figure}
	
	\begin{figure}[!t]
		\removelatexerror
		\begin{algorithm}[H]
			\caption{Archive Update}
			\label{Algorithm 3}
			\KwIn{the updated solution of current grasshopper population $\left\{\mathbb{X}_{1},\mathbb{X}_{2},\cdots, \mathbb{X}_{Npop}\right\}$, the current Archive $A_{current}$, the current Archive size $N_{A}$;}
			\KwOut{the updated Archive $A_{updated}$;}
			\LinesNumbered
			$A_{updated}$ $\gets$ $A_{current}$;\\
			\tcc{\textbf{Archive Mutation}}
			\For{$i = 1$ to $N_{A}$}
			{  
				\tcc{\textbf{Continuous Part Mutation}}
				Using Eq.~\eqref{Equ:Cauchy} mutate continuous part $\mathbb{X}^{CCur}_{i}=(\mathbb{I}^{Cur}_{\mathrm{P}},\mathbb{I}^{Cur}_{\mathrm{V}},\mathbb{P}^{Cur}_{\mathrm{C}})$ to produce new continuous part $\mathbb{X}^{CM}_{i}=(\mathbb{I}^{New}_{\mathrm{P}},\mathbb{I}^{New}_{\mathrm{V}},\mathbb{P}^{New}_{\mathrm{C}})$; \\
				\tcc{\textbf{Discrete Part Crossover}}
				Random select the discrete part of a grasshopper in $A_{current}$ as $(\mathbb{X}_{TS},\mathbb{X}_{TO})$ \\
				Using Eq.~\eqref{Equ:Discrete Solution Update} crossover discrete part $\mathbb{X}_{i}^{DCur}=(\mathbb{S}_{\mathrm{P}}^{Cur}$, $\mathbb{O}^{Cur})$ to produce two new discrete offspring parts $(\mathbb{S}_{\mathrm{P}}^{New1}$, $\mathbb{O}^{New1})$ and $(\mathbb{S}_{\mathrm{P}}^{New2}$, $\mathbb{O}^{New2})$; \\
				\tcc{\textbf{Merge and Selection}}
				\textbf{Merge}: Obtain two grasshopper offspring $M_{1}=\left(\mathbb{X}^{CM}_{i},\mathbb{S}_{\mathrm{P}}^{New1}, \mathbb{O}^{new1} \right)$ and $M_{2}=\left(\mathbb{X}^{CM}_{i},\mathbb{S}_{\mathrm{P}}^{New2}, \mathbb{O}^{New2} \right)$ by merging continuous and discrete part of solution;\\
				\textbf{Selection}: Calculate the fitness of $M_{1}$ and $M_{2}$;\\
				\eIf{$M_{1}$ dominates $M_{2}$}
				{
					$A_{updated}$ $\gets$ $A_{updated} \cup M_{1}$;
				}
				{
					\eIf{$M_{2}$ dominates $M_{1}$}
					{
						$A_{updated}$ $\gets$ $A_{updated} \cup M_{2}$;
					}
					{
						\eIf{$rand < 0.5$}{
							$A_{updated}$ $\gets$ $A_{updated} \cup M_{1}$;
						}{
							$A_{updated}$ $\gets$ $A_{updated} \cup M_{2}$;
						}
					}            
				}
			}
			\tcc{\textbf{Archive Determination}}
			Remove dominated solutions in $A_{updated}$;\\
			\If{$size(A_{updated}) \geq maxArchiveSize$}
			{
				Execute DCDE on $A_{updated}$;      
			}
			Return $A_{updated}$; 
			
		\end{algorithm}
	\end{figure}
	
	\subsubsection{Archive Update} 
	\label{SubSubSection:Archive Update}
	
	\par A single update operator cannot simultaneously balance global exploration and local exploitation capabilities. At the same time, the archive has an impact on guiding the update of the grasshopper population. Thus, we introduce archive mutation and dynamic elimination-based crowding distance (DCDE) \cite{Zhao2022} to improve global searchability and solution distribution in archive. In the archive mutation stage, we employ mutation and crossover for the continuous and discrete parts, respectively. The mutation for the continuous part is described as
	\begin{equation}
		\label{Equ:Cauchy}
		\mathbb{X}_{i}^{CM} = \mathbb{X}^{C}_{i} + \alpha_{2} \otimes Cauchy(0,1),
	\end{equation}
	
	\noindent where $\mathbb{X}^{C}_{i}$ is continuous part of the $i$th solution in current archive, and $\mathbb{X}_{i}^{CM}$ represents continuous part the $i$th solution after Cauchy mutation. Moreover, $\alpha_{2}$ is the step size scaling factor, and $Cauchy(0,1)$ is the standard Cauchy distribution. Similar to the discrete solution update method, we simply treat a random grasshopper position in the archive as the target grasshopper location $\mathbb{X}_{TS}$ and $\mathbb{X}_{TO}$ in Eq.~\eqref{Equ:Discrete Solution Update}. In the archive determination stage, we replaced the traditional MOGOA archive elimination method with DCDE. The purpose of this modification is to more effectively maintain the diverse and uniform distributions of IMOGOA archive on the Pareto front. Specifically, the difference between DCDE and the traditional elimination-based crowding distance strategy is shown in Fig. \ref{Figure:DCDE}. A more detailed explanation about DCDE can be found in the \cite{Zhao2022}.
	
	\begin{figure*}[htbp]
		\centering
		\includegraphics[width=\linewidth,scale=1.00]{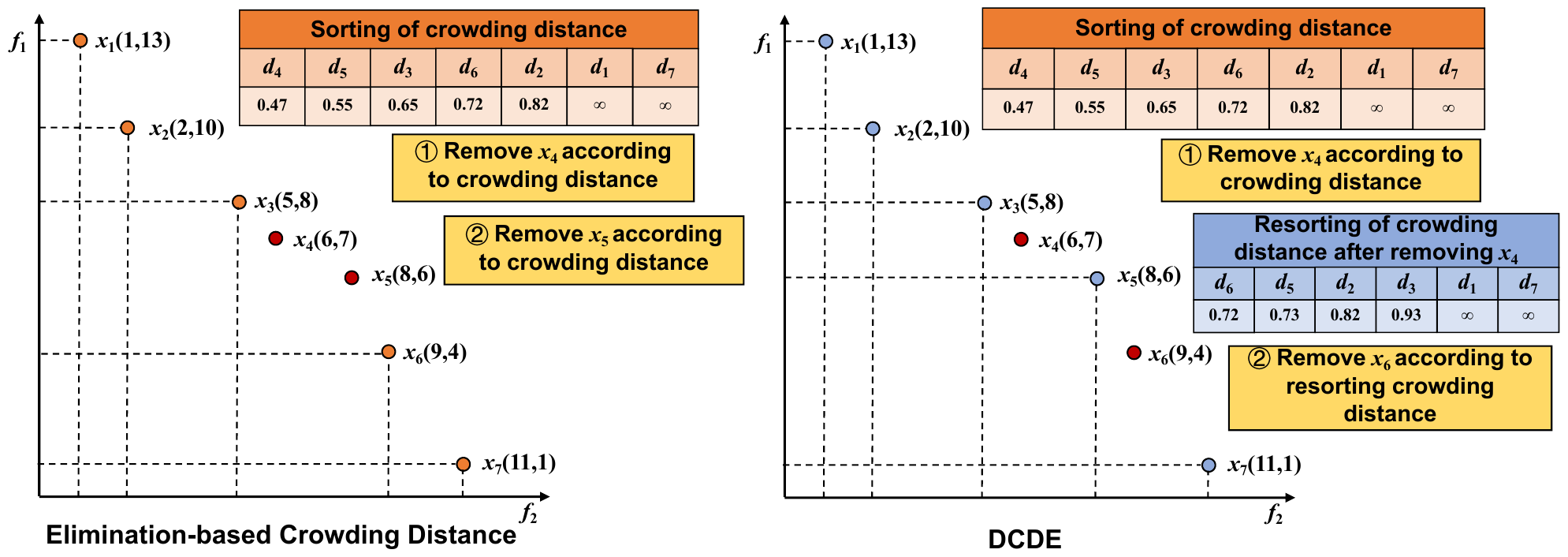}
		\caption{A comparison sketch between elimination-based crowding distance and DCDE: Elimination-based crowding distance only sorts once and then deletes the archive regardless of how many individuals it overflows, while DCDE adopts the strategy of sorting once and deleting once, which better achieves the uniformity of archive.}
		\label{Figure:DCDE}
	\end{figure*}
	
	%
	%
	{\color{color}
		\subsection{Scheduling Mechanism of the IMOGOA} 
		\label{SubSection:Operating Mechanism of the IMOGOA}
		\par A simple and efficient scheduling mechanism is necessary for implementing the proposed IMOGOA in the UAV swarm-enabled collaborative secure relay communication system. Supposing that UAVs can gather the initial state information during the start stage of communication. Consider the limited computing and energy resources, to run IMOGOA on a single UAV may waste too much time and bring extra energy consumption. Due to the sufficient ability in terms of energy and computing, the MBS can be considered as a supercomputer to run IMOGOA. The main steps of the scheduling mechanism of IMOGOA are as follows.
		\par \textbf{\textit{Step 1 - Information Fusion} (at UAV Swarm):} Select UAV $U_{k}$ to broadcast the start message $M_{Start}$, where the position and network information of UAV $U_{k}$ is included. After UAV $U_{k^{-}}$ receives the message $M_{Start}$, it replies a confirm message $M_{ACK1}$, which contains its own the position and network information. Subsequently, UAV $U_{k}$ aggregates the position and network information UAV $U_{k}$ aggregates the position and network information, and transmits the fusion information $M_{Fusion}$ to MBS.
		\par \textbf{\textit{Step 2 - Optimization Algorithm Execution} (at MBS):} MBS receives the message $M_{Fusion}$ and replies a confirm message $M_{ACK2}$. Then, the supercomputer of MBS runs the proposed IMOGOA to produce the optimized solution for US$^2$RMOP. Subsequently, MBS sends the solution distribution message $M_{Solution}$ to UAV $U_{k}$, which contains the optimized solution for US$^2$RMOP.
		\par \textbf{\textit{Step 3 - Optimization Information Distribution} (at UAV Swarm):} The UAV $U_{k}$ receives the message $M_{Solution}$, and distributes optimization information to other UAVs by using the wireless channel allocation \cite{Dai2022}. Until all other UAVs reply to the message $M_{ACK3}$, the scheduling mechanism of IMOGOA is terminated.
		\par Note that the UAV $U_{k}$ will resent the message $M_{Start}$, $M_{Fusion}$ or $M_{Solution}$ every $T_{waiting}$ until it receives $M_{ACK1}$, $M_{ACK2}$ or $M_{ACK3}$ to ensure the reliability of scheduling mechanism.}

	%
	%
	\subsection{Algorithm Analysis} 
	\label{SubSection:Algorithm Analysis}
	\par In this section, the scheduling overhead and computational complexity of the proposed IMOGOA are analyzed. 
	\subsubsection{Scheduling Overhead Analysis}
	\label{SubSubSection:Scheduling Overhead Analysis}
	\par In this analysis, we assume that the maximum number of re-transmissions is $N_{re}$ for the messages $M_{Start}$, $M_{Fusion}$ and $M_{Solution}$. The bit numbers of messages $M_{Start}$, $M_{Fusion}$, $M_{Solution}$, $M_{ACK1}$ and $M_{ACK3}$ are $b_{S}$, $b_{F}$, $b_{O}$, $b_{A1} $ and $b_{A3}$, respectively. Moreover, the packet loss probabilities of links are $f$, and the transmission rate and power of the UAV are $r$ and $P_{T}$, respectively.
	\begin{proposition}
		The communication energy consumption of the UAV swarm for the scheduling mechanism of IMOGOA is
		\begin{small}
			\begin{equation}
				E_{C}  = \frac{P_{T} [(b_{S} + b_{F} + b_{O})(1 - f^{N_{re}}) + (K - 1)(b_{A1} + b_{A3})(1 - f)]}{r(1 - f)}.
			\end{equation}
		\end{small}
	\end{proposition}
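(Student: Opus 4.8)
The plan is to compute the total expected number of bits transmitted by the UAV swarm across the three scheduling steps, and then convert this bit count into energy through the per-bit cost $P_T/r$: transmitting $b$ bits at rate $r$ with power $P_T$ costs $P_T \cdot (b/r)$ joules, so the whole quantity $E_C$ should factor as $(P_T/r)$ times an expected bit count.

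First I would classify every message by its transmitter and decide whether it is charged to the swarm. The forward messages $M_{Start}$, $M_{Fusion}$ and $M_{Solution}$ (with $b_S$, $b_F$, $b_O$ bits) are all emitted by the UAV $U_k$, so they count; the replies $M_{ACK1}$ and $M_{ACK3}$ (with $b_{A1}$, $b_{A3}$ bits) are emitted by each of the remaining $K-1$ UAVs, hence the factor $K-1$; and crucially $M_{ACK2}$ is sent by the MBS, which is modelled as an external supercomputer and therefore excluded. This accounting is exactly what explains why $b_{A2}$ does not appear in the formula.

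Next I would model the retransmission of the three forward messages. Because $U_k$ resends each of them every $T_{waiting}$ until the corresponding ACK returns, for at most $N_{re}$ attempts, and each attempt is independently lost with probability $f$, the number of transmissions $N$ of such a message is a truncated geometric random variable. Using the identity $E[N] = \sum_{k \geq 1} P(N \geq k)$ together with $P(N \geq k) = f^{k-1}$ for $k \leq N_{re}$ and $P(N \geq k) = 0$ for $k > N_{re}$, I obtain the expected transmission count $\sum_{k=1}^{N_{re}} f^{k-1} = (1 - f^{N_{re}})/(1 - f)$. The acknowledgements are assumed to be sent once each, contributing a factor of $1$.

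Finally I would assemble the pieces: the expected total bit count is $(b_S + b_F + b_O)\frac{1 - f^{N_{re}}}{1 - f} + (K-1)(b_{A1} + b_{A3})$, and multiplying by $P_T/r$ and placing everything over the common denominator $r(1-f)$ reproduces the claimed closed form. The main obstacle is the probabilistic step, namely deriving the truncated-geometric expectation $(1 - f^{N_{re}})/(1-f)$ and handling the truncation at $N_{re}$ attempts correctly, combined with the bookkeeping subtlety that $M_{ACK2}$ must be dropped and that the factor $(1-f)$ attached to the ACK term is precisely what cancels the denominator, reflecting the modelling assumption that each ACK incurs only a single transmission.
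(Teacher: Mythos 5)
Your proposal is correct and takes essentially the same approach as the paper: both charge the three forward messages $M_{Start}$, $M_{Fusion}$, $M_{Solution}$ with the expected truncated-geometric transmission count, charge the $K-1$ UAVs' acknowledgments $M_{ACK1}$, $M_{ACK3}$ once each while excluding the MBS-sent $M_{ACK2}$, and convert bits to energy via $P_T/r$. The only cosmetic difference is that you obtain $\overline{N}_T = (1-f^{N_{re}})/(1-f)$ through the tail-sum identity $E[N]=\sum_{k} \Pr(N\geq k)=\sum_{k=1}^{N_{re}} f^{k-1}$, whereas the paper expands the expectation over the round of first success and telescopes; both compute the same quantity and yield the identical closed form.
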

	\begin{proof}
		The average re-transmission number of $M_{Start}$, $M_{Fusion}$ and $M_{Solution}$ can be expressed as
		\begin{equation}
			\begin{split}
				\begin{aligned}
					\overline{N}_{T} &= 1 \cdot \Pr(S^{1}) + 2 \cdot \Pr(F^{1}, S^{2}) + \cdots \\
					& \quad + (N_{re}-1) \cdot \Pr(F^{1}, F^{2}, \cdots, S^{N_{re}-1})\\
					& \quad + N_{re} \cdot \Pr(F^{1}, F^{2}, \cdots, F^{N_{re}-1}),
				\end{aligned}
			\end{split}
			\label{Equ:Communication Energy Consumption}
		\end{equation}
		\noindent where the successful probability of transmission at the $m$ round is denoted as $\Pr(F^{1}, F^{2}, \cdots, F^{m-1}, S^{m})$. Since the successful and failed probability of transmission are independent for each round, the average re-transmission number of $M_{Start}$, $M_{Fusion}$ and $M_{Solution}$ can further be written as
		\begin{equation}
			\begin{split}
				\begin{aligned}
					\overline{N}_{T} &= (1-f) + 2(f-f^{2}) + \cdots \\
					& \quad + (N_{re}-1) (f^{N_{re}-2} - f^{N_{re}-1}) + N_{re}f^{N_{re}-1} \\ 
					& = \frac{1-f^{N_{re}}}{1-f}.
				\end{aligned}
			\end{split}
			\label{Equ:Retransmission Number}
		\end{equation}
		\par Thus, the communication energy consumption of the UAV swarm for the scheduling mechanism of IMOGOA can be calculated as follow:
		\begin{equation}
			\begin{aligned}
				E_{C} &= E_{Step1} + E_{Step3} \\
				& = P_{T} \cdot [(b_{S} + b_{F}) \cdot \overline{N}_{T} + (K - 1) \cdot b_{A1}] / r\\
				& \quad + P_{t} \cdot [b_{O} \cdot \overline{N}_{T} +  (K - 1) \cdot b_{A3}] / r\\
				& = \frac{P_{T} \cdot [(b_{S} + b_{F} + b_{O}) \cdot \overline{N}_{T} +  (K - 1) \cdot (b_{A1} + b_{A3})]}{r}.
			\end{aligned}
			\label{Equ:Communication Energy Consumption_Middle}
		\end{equation}
		\par Substituting Eq.~\eqref{Equ:Retransmission Number} into Eq.~\eqref{Equ:Communication Energy Consumption_Middle}, the scheduling energy consumption can be re-written as Eq.~\eqref{Equ:Communication Energy Consumption}.
	\end{proof}
	
	\par The communication energy consumption of the UAV swarm for the scheduling mechanism of IMOGOA is a small enough energy overhead compared to the motion energy consumption we optimized previously for the UAV swarm. For instance, we consider a scenario where 16 UAVs serve 8 remote IoT terminal devices. Moreover, the MBS is equipped with a PAA with $6 \times 6$ array elements. The maximum number of re-transmissions is set to 3. The common format of the scheduling messages is expressed as $[Src, Dst, Data]$, where $Src$ and $Dst$ represent the source and destination addresses of the message, respectively, each of which occupies $4$ Bytes. Moreover, $Data$ indicates the specific content of message. For the five types of messages in the scheduling process, the specific data of messages are as follows. 
	\begin{itemize}
		\item $M_{Start}$ and $M_{ACK1}$ contain information about the positions of the UAV and the ground eavesdropper, which occupies $5 \times 4$ Bytes (if the ground eavesdropper cannot be detected by this UAV, $Data$ is padded with $0$).
		\item $M_{Fusion}$ contains information about the positions of each UAV and the ground eavesdropper, which occupies $(3 \times K + 2) \times 4$ Bytes.
		\item  $M_{Solution}$ contains information about the solution obtained by MBS, which occupies $\left(M \times N + 2 + 4 \times K\right) \times T \times 4$ Bytes.
		\item $M_{ACK3}$ contains information about the acknowledgment information, which occupies $1 \times 4$ Bytes.
	\end{itemize}
	\par Accordingly, $b_{S}$, $b_{F}$, $b_{O}$, $b_{A1}$ and $b_{A3}$ can be approximately computed as $28$ Bytes, $208$ Bytes, $3272$ Bytes, $28$ Bytes and $9$ Bytes, respectively. We assume that the transmission rate and power of the UAV are $1$ Mbps and $0.1$ W, respectively. As a result, the communication energy consumption of the UAV swarm for the scheduling mechanism of IMOGOA is $0.0034$ J, which is significantly less than the motion energy consumption.

	\subsubsection{Computational Complexity Analysis} 
	\par The computational complexity of IMOGOA is primarily dependent on its population size $N_{pop}$, the number of objectives $n$ and the maximum number of iterations $iter_{max}$. For simplicity, we assume that the size of the external archive is equal to the population size. The following operators represent worst-case computational complexity:
	
	\begin{itemize}
		\item{\textit{Solution Update}: Updating the continuous and discrete components accounts for the computational complexity of IMOGOA, with complexities of $O(iter_{max} \times N_{pop}^{2})$ and $O(iter_{max}\times N_{pop})$, respectively.}
		\item{\textit{Archive Update}: The archive update includes archive mutation and DCDE with complexities of $O(iter_{max} \times 2nN_{pop})$ and $O((iter_{max} \times nN_{pop}^2
			\text{log}N_{pop})$, respectively.}
	\end{itemize}
	
	\par Therefore, the overall computational complexity of IMOGOA is $O(iter_{max} \times nN_{pop}^{2}\text{log}N_{pop})$.
	
	\section{Simulation Results} 
	\label{Section:Simulation Results}
	
	\par In this part, we perform simulations to verify the effectiveness and efficiency of the proposed strategy and algorithm.
	
	%
	%
	\subsection{Simulation Setups} 
	\label{SubSection:Simulation Setups}
	
	\begin{figure*}[!t]
		\centering
		\subfloat[]{
			\label{Figure:OptimizedPAARate}
			\includegraphics[width=.32\linewidth]{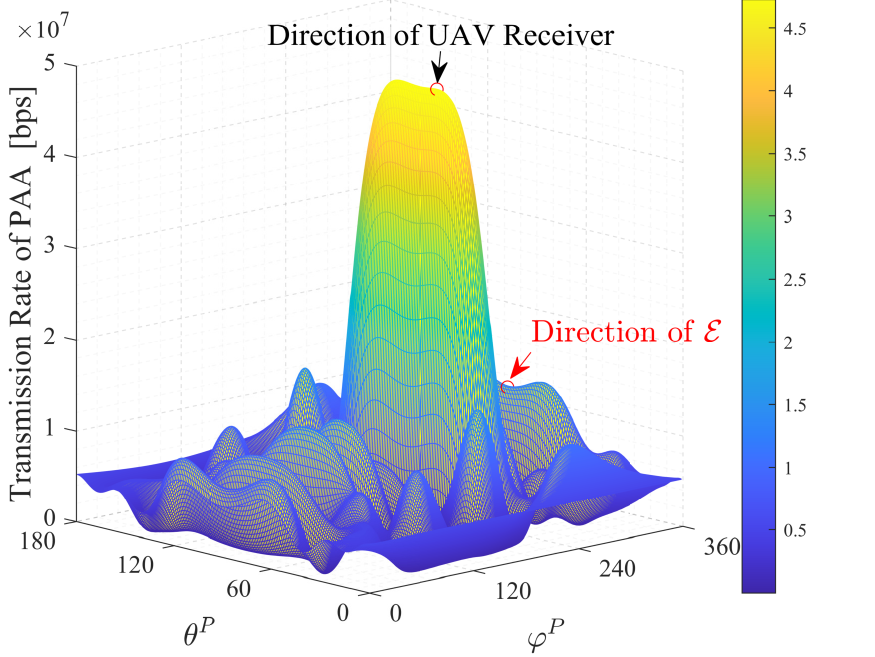}}
		\subfloat[]{
			\label{Figure:OptimizedCBRate}
			\includegraphics[width=.32\linewidth]{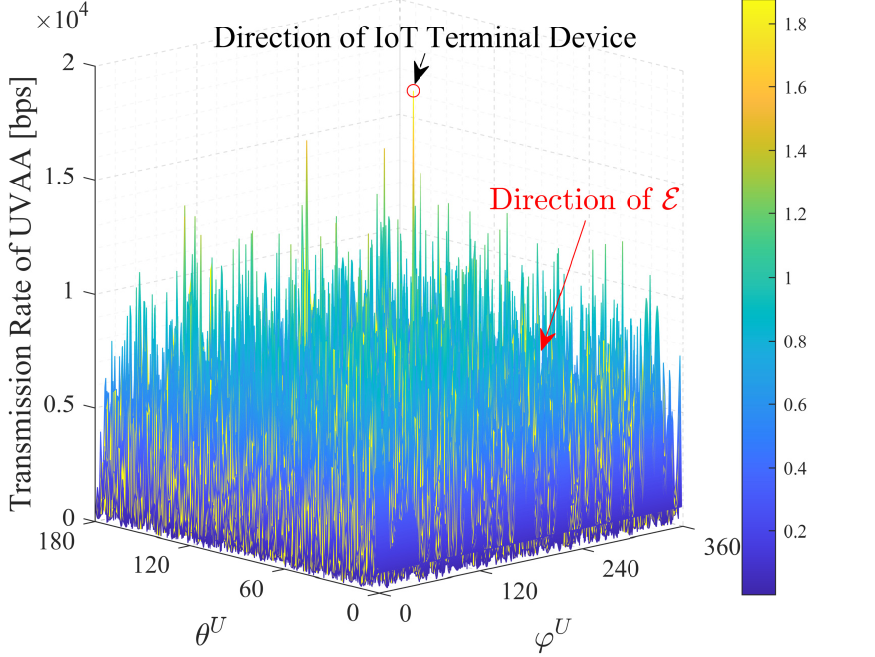}}
		\subfloat[]{
			\label{Figure:OptimizedUAVTrajectory}
			\includegraphics[width=.32\linewidth]{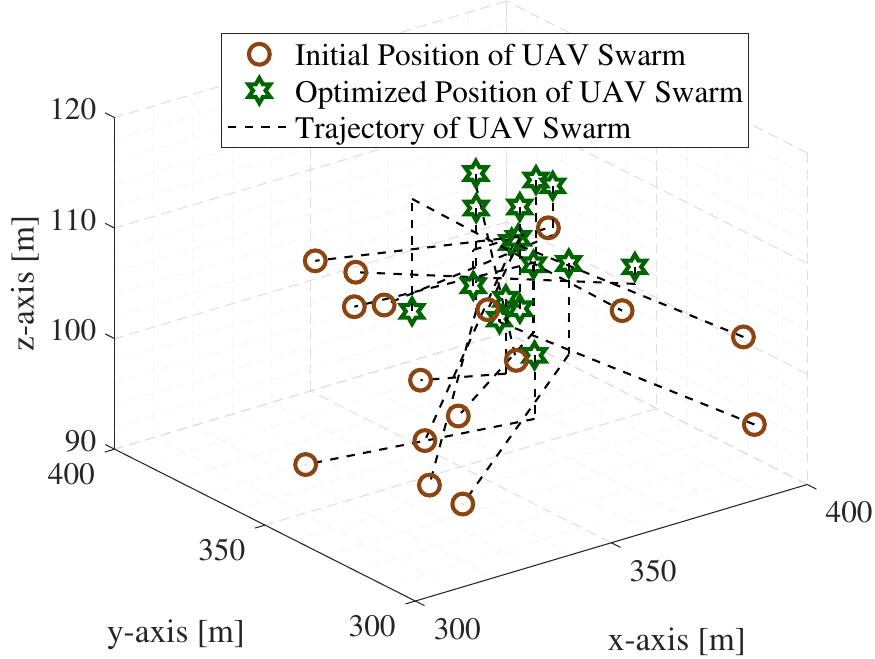}}
		\caption{Optimization results obtained by IMOGOA. (a) The optimized transmission rate distribution of PAA. (b) The optimized transmission rate distribution of UVAA. (c) The optimized trajectory of UAV swarm.}
		\label{Figure:Optimization results}
	\end{figure*}
	
	\par We consider 8 IoT terminal devices which are with the queued state for service associated with $\mathcal{S}$. The $\mathcal{S}$ is equipped with a PAA with $6 \times 6$ elements and its transmit power is set to $3.6$ W. The UAV swarm has 16 UAV individuals that can be operated and the transmit power of each UAV is set to be $0.1$ W. Additionally, other parameters about channels and UAVs are shown in Table \ref{Table:Other Simulation Parameter Settings}. Moreover, the UAV swarm is distributed in a $100$ m $\times$ $100$ m area. This study selects the carrier frequency $2.4$ GHz as it was early exploited by Wi-Fi technology and is still widely supported by many IoT terminal devices.
	
	\par On the one hand, the effectiveness of our strategy is verified by comparing it with other relay strategies. On the other hand, several multi-objective optimization algorithms, which include NSGA-II \cite{Deb2002}, multi-objective particle swarm optimization (MOPSO) \cite{Coello2002}, multi-objective grey wolf optimizer (MOGWO) \cite{Mirjalili2016}, conventional MOGOA and multi-objective multi-verse optimization (MOMVO) \cite{Mirjalili2017}, are employed to solve the US$^2$RMOP to identify the efficiency of our proposed IMOGOA.
	
	%
	%
	\subsection{Performance Metric} 
	\label{SubSection:Metric}
	
	\par For the performance of multi-objective optimization algorithms, the convergence and diversity are two critical dimensions. Hypervolume (HV), as a comprehensive evaluation metric has been used frequently to evaluate the comprehensive performance of multi-objective optimization algorithms \cite{Bader2011}, \cite{Zhang2023}. Specifically, HV is defined as the hypervolume between the estimated Pareto front $\mathcal{F}$ and the reference vector $r$, which can be expressed as
	\begin{equation}
		\label{Equ:HV}
		HV(\mathcal{F},r) = \mathscr{L}\left(\bigcup_{0 \leq i \leq F}[f_{1}(i), r_{1}] \times \cdots \times [f_{n}(i), r_{n}]\right),
	\end{equation}
	
	\noindent where the reference vector is defined as $r = [r_{1}, \cdots, r_{n}]$ , $n$ is the number of objectives, and the size of $\mathcal{F}$ is represented as $F$. Moreover, $\left[\begin{matrix}f_1\left(i\right),r_1\end{matrix}\right]\times\cdots\times\left[\begin{matrix}f_n\left(i\right),r_n\end{matrix}\right]$ represents the hypercubes by all points that are dominated by solution $i$ but not dominated by reference points. Note that $\mathscr{L}(*)$ is Lebesgue measure, which inscribes the hypervolume of all objectives' hypercubes. Specifically, $HV(\mathcal{F},r)$ is calculated by the Monte Carlo estimation proposed in \cite{Bader2011}. 
	
	\begin{table}
		\renewcommand\arraystretch{1.2}
		\setlength\tabcolsep{3pt}
		\centering 
		\caption{Other Simulation Parameter Settings}  
		\label{Table:Other Simulation Parameter Settings} 
		\scalebox{0.80}{ 
			\begin{tabular}{l l}\toprule[1.5pt]
				{Parameters} & {Values}  \\  \toprule[1pt]
				{Bandwidth} & {$B = 20$ MHz}   \\
				{Parameters of $S$-curve} & {$a = 9.61, b = 0.16$}   \\
				{\color{color}{Average channel power gain at $d_{0}=1$ m for LoS state}} & {\color{color}{$\beta_{0} = -60$ dB}}   \\
				{Additional signal attenuation factor for NLoS propagation} & {$\mu = -20$ dB}  \\
				{Average path loss exponents for NLoS state} & {$\alpha_{NLoS} = 3.5$}   \\
				{Average path loss exponents for LoS state} & {$\alpha_{LoS} = 2.5$}   \\
				{Average path loss exponents for G2G link} & {$\alpha_{G} = 3.5$}   \\
				{Noise power of channel} & {$\sigma^2 = -174$ dBm/Hz}   \\
				{Mass of a UAV} & {$m = 2$ kg}   \\
				{Tip speed of the rotor blade} & {$u_{tips} = 120$ m/s}   \\
				{Mean rotor-induced velocity for hovering} & {$u_{0} = 4.03$ m/s}   \\
				{Air density} & {$\rho = 1.225$ kg/m$^3$}   \\
				{Rotor disc area} & {$A = 0.053$ m$^3$}   \\
				{Fuselage drag ratio} & {$d_{0} = 0.6$}   \\
				{Rotor solidity} & {$s = 0.05$}   \\
				\toprule[1.5pt]
		\end{tabular}}
	\end{table}
	%
	%
	\subsection{Simulation Results} 
	\label{SubSection:Simulation Results}
	In this section, we first show the visualization results of IMOGOA and compare our strategy with two other relay strategies. Furthermore, several multi-objective optimization algorithms are employed to evaluate the efficiency of IMOGOA. Finally, the performance comparisons of two different situations are analyzed.

	\begin{figure}[htbp]
		\centering
		\includegraphics[width=\linewidth,scale=1.00]{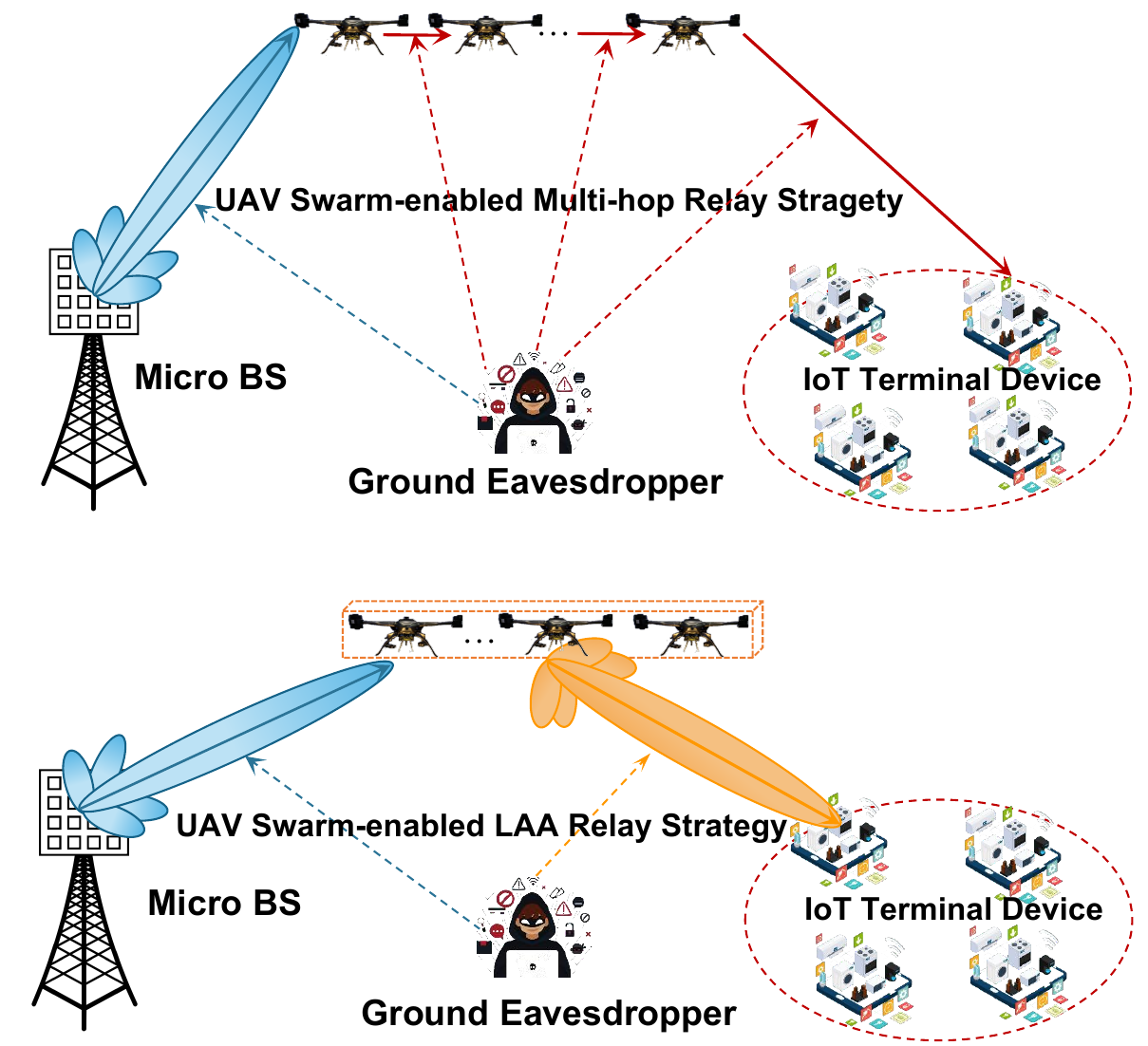}
		\caption{Schematic maps of UAV swarm-enabled multi-hop relay and UAV swarm-enabled LAA relay strategies.}
		\label{Figure:Relay}
	\end{figure}
	\subsubsection{Visualization Results}
	\label{SubSubSection:Visualization Results}
	
	\par Fig.~\ref{Figure:Optimization results} shows the optimization results of IMOGOA, by visualizing the optimized transmission rate distribution of the PAA, the optimized transmission rate distribution of the UVAA, and the trajectory of the UAV swarm. Due to space limitations, we only show a schematic for serving one associated IoT terminal device. From the Figs.~\ref{Figure:Optimization results}\subref{Figure:OptimizedPAARate} and \ref{Figure:Optimization results}\subref{Figure:OptimizedCBRate}, we can observe that the transmission rate is the highest in the direction of the target, regardless of whether the direction is towards the UAV receiver at the PAA or the IoT terminal device at the UVAA. Moreover, the transmission rate is relatively low in the direction of ground eavesdropper. Furthermore, Fig.~\ref{Figure:Optimization results}\subref{Figure:OptimizedUAVTrajectory} depicts the trajectory of the UAV swarm, demonstrating that the UAVs are appropriately spaced without colliding or coupling with each other, yet not overly dispersed. Accordingly, these results indicate that the proposed IMOGOA can yield significant optimization results by optimizing the three optimization objectives in the given scenario.
	
	\begin{figure}[htbp]
		\centering
		\includegraphics[width=0.8\linewidth,scale=1.00]{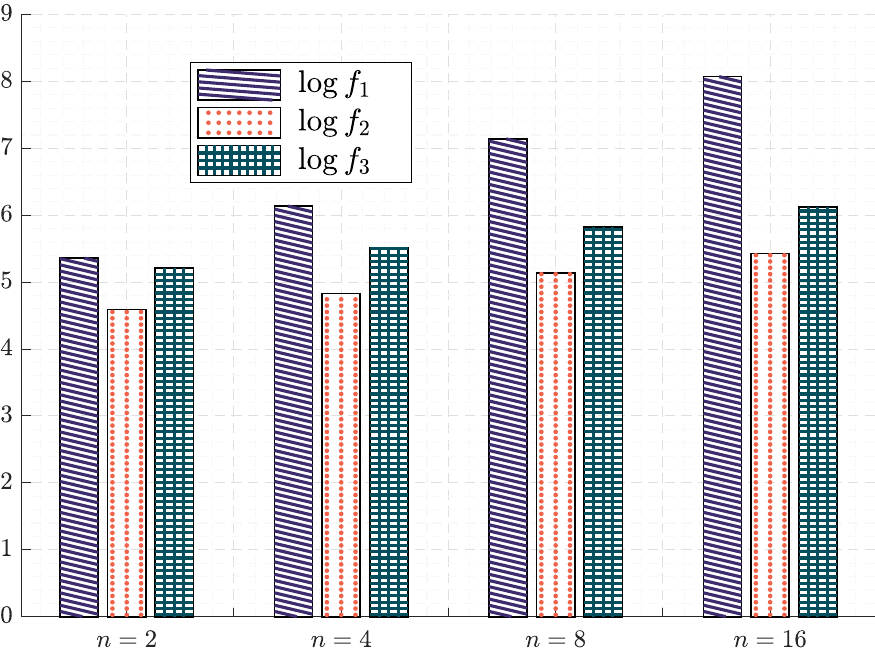}
		\caption{MRS performance with the different number of UAV relay hops.}
		\label{Figure:RelayBar}
	\end{figure}
	
	\subsubsection{Comparison with Other Relay Strategies}
	\label{SubSubSection:Comparison with Other Relay Strategy}

	\par In this work, we consider other two relay strategies that are the conventional UAV swarm-enabled multi-hop relay \cite{Kim2020} and UAV swarm-enabled linear antenna array (LAA) relay \cite{Mozaffari2019} for comparison, and these two scenarios are shown in Fig.~\ref{Figure:Relay} and the details are as follows.
	\begin{itemize}
		\item \textit{UAV swarm-enabled multi-hop relay strategy (MRS):} The traditional MRS in our simulation design for sequentially serving $8$ associated IoT terminal devices where $n = 2, 4, 8, 16$ UAVs are equally spaced as hop-by-hop relays.
		\item \textit{UAV swarm-enabled LAA relay strategy (LRS):} In our simulation, the traditional LRS is deployed in the center of movement space and set with different array element spacings of $1$ m, $2$ m, $3$ m, $4$ m, $5$ m.
	\end{itemize}

	\par {\color{color}Fig.~\ref{Figure:RelayBar} shows the results for all three objectives with different numbers of hops in UAV swarm-enabled MRS. Note that all the three objectives have been processed by log-transformation because of different orders of magnitude. First, it is clear that as the number of hops increases, the achievable sum rate of IoT terminal devices depicts a upward trend. This is attributed to the diminished channel loss. Specifically, the closer distance and the higher probability of LoS are attained with the increase of hops. Moreover, the achievable sum rate of $\mathcal{E}$ exhibits an upward trend with the increase of hops, which is associated with the increase of wiretap links. Additionally, the energy consumption of UAV swarm also shows an upward trend with the increase of hops. Furthermore, Table \ref{Table:Performance comparisons between multi-hop and CB strategy} provides numerical simulation results for three objectives for the UVAA relay strategy (URS) utilizing the proposed IMOGOA in the case of $16$ UAVs and MRS in the cases of $2, 4, 8, 16$ UAVs. As can be seen, our strategy has achieved a better trade-off among the three optimization objectives, which provided a lower achievable sum rate of $\mathcal{E}$ and less UAV swarm energy consumption with the similar achievable sum rate of IoT terminal devices.}

	\begin{figure}[htbp]
		\centering
		\includegraphics[width=0.8\linewidth,scale=1.00]{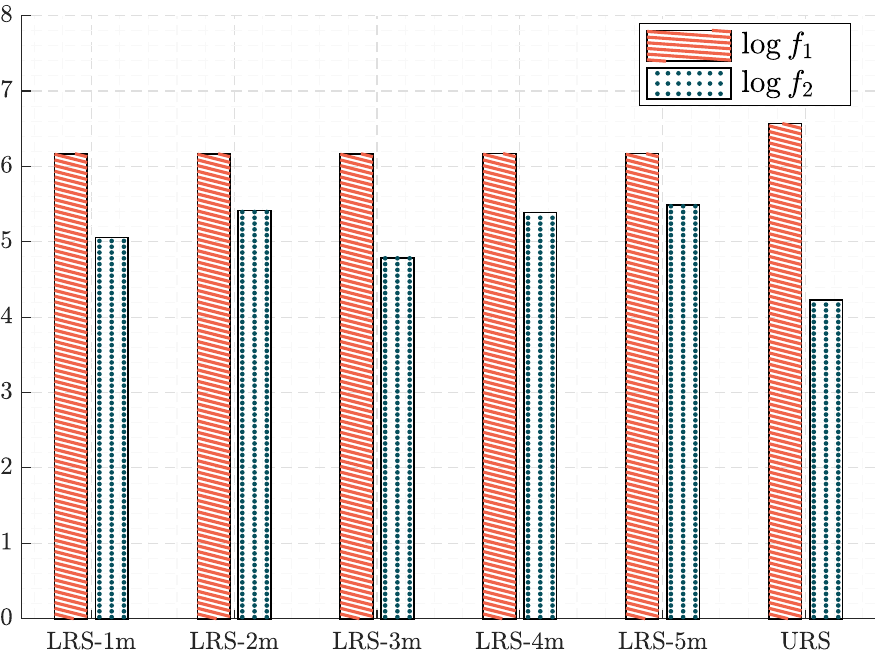}
		\caption{Performance comparison between LRS and URS.}
		\label{Figure:LAABar}
	\end{figure}
	\begin{table}
		\renewcommand\arraystretch{1.2}
		\centering 
		\caption{Performance Comparison between MRS and URS}  
		\label{Table:Performance comparisons between multi-hop and CB strategy}  
		\begin{tabular}{l l l l}\toprule[1.5pt]
			{Benchmarks} & {$f_{1}$ [bps]} & {${f_2}$ [bps]} &{$f_{3}$ [J]} \\  \toprule[1pt]
			{MRS ($2$ UAVs)} & {$2.2827 \times 10^5$} & {$3.8974\times 10^4$} &{$1.6325 \times 10^5$}  \\
			{MRS ($4$ UAVs)} & {$1.3737 \times 10^6$} & {$6.7771 \times 10^4$} &{$3.2865 \times 10^5$}  \\
			{MRS ($8$ UAVs)} & {$1.3837 \times 10^7$} & {$1.3725 \times 10^5$} &{$6.6050 \times 10^5$}  \\
			{MRS ($16$ UAVs)} & {$1.1808 \times 10^8$} & {$2.6738 \times 10^5$} &{$1.3225 \times 10^6$}  \\
			{URS ($16$ UAVs)} & {$3.7021 \times 10^6$} & {$1.6846 \times 10^4$} &{$5.7818 \times 10^4$}  \\
			\toprule[1.5pt]
		\end{tabular}  
	\end{table}
	\par Fig.~\ref{Figure:LAABar} shows the corresponding simulation results of LRS and URS, where each group of LRS is the mean calculated by simulating $100$ times to reduce the effect of randomness. Specifically, according numerical results are listed in Table~\ref{Table:Performance comparisons between LAA and CB strategy}. We can clearly observe that our proposed URS is superior to LRS, which may be due to the advantage of our strategy in the adjustable space of antenna array element positions.

	\begin{table}
		\renewcommand\arraystretch{1.2}
		\centering 
		\caption{Performance Comparison between LRS and URS}  
		\label{Table:Performance comparisons between LAA and CB strategy}  
		\begin{tabular}{l l l l}\toprule[1.5pt]
			{Benchmarks} & {$f_{1}$ [bps]} & {${f_2}$ [bps]} \\  \toprule[1pt]
			{LRS ($1$3 m)} & {$1.4743 \times 10^6$} & {$1.1358\times 10^5$}  \\
			{LRS ($2$ m)} & {$1.4712 \times 10^6$} & {$2.6128 \times 10^5$}  \\
			{LRS ($3$ m)} & {$1.4713 \times 10^6$} & {$6.0798 \times 10^4$} \\
			{LRS ($4$ m)} & {$1.4804 \times 10^6$} & {$2.4441 \times 10^5$} \\
			{LRS ($5$ m)} & {$1.4474 \times 10^6$} & {$3.0725 \times 10^5$} \\
			{URS} & {$3.7021 \times 10^6$} & {$1.6846 \times 10^4$} \\
			\toprule[1.5pt]
		\end{tabular}  
	\end{table}

	\subsubsection{Comparison with Other Multi-objective Optimization Algorithms}
	\label{SubSubSection:Comparison with Other Multi-objective Optimization Algorithms}
	
	\par In this section, the proposed IMOGOA is compared with 5 other multi-objective optimization algorithms mentioned in Section~\ref{SubSection:Simulation Setups}, and the hyperparameters of these algorithms are listed in Table~\ref{Table:hyperparameters of the algorithms}. Moreover, the maximum iteration number and population size of each algorithm are set to 500 and 30, respectively.
	
	\begin{table}
		\renewcommand\arraystretch{1.2}
		\centering 
		\caption{Hyperparameters of the Algorithms}  
		\label{Table:hyperparameters of the algorithms}  
		\begin{tabular}{c c}\toprule[1.5pt]
			{Algorithms} & {Hyperparameters} \\ \toprule[1pt]
			{NSGA-II} &{$p_{c}=0.9,p_{m}=0.1$} \\
			{MOPSO} & {$w=0.5,c_{1}=1,c_{2}=2,w_{damp}=0.99$} \\
			{MOGWO} & {$\alpha=0.1,\beta=4,\gamma=2$} \\
			{MOGOA} & {$c_{max}=1,c_{min}=0.0004$} \\
			{MOMVO} & {$WEP_{max}=1,WEP_{min}=0.2$} \\
			{IMOGOA} & {$c_{max}=1,c_{min}=0.0004,\alpha_{1}=0.2,\alpha_{2}=0.2$} \\
			\toprule[1.5pt]
		\end{tabular}  
	\end{table}
	
	\begin{figure}[htbp]
		\begin{minipage}[b]{\linewidth}
			\subfloat[]{
				\label{Figure:Normal}
				\includegraphics[width=\linewidth,scale=1.00]{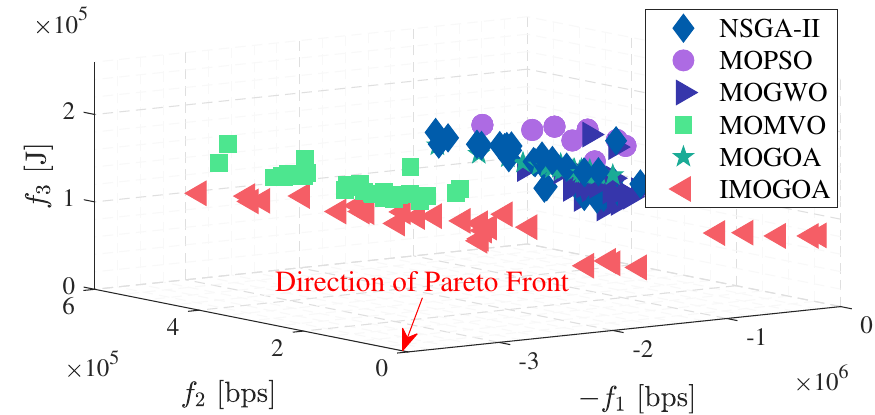}}
		\end{minipage}
		\begin{minipage}[b]{0.32\linewidth}
			\subfloat[]{
				\label{Figure:FrontView}
				\includegraphics[width=\linewidth]{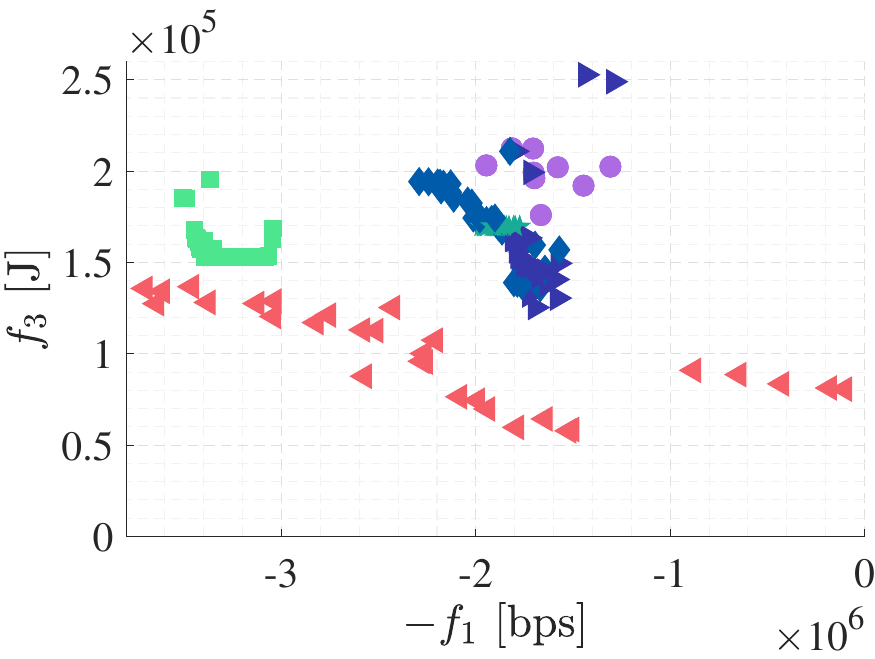}}
			\subfloat[]{
				\label{Figure:LeftView}
				\includegraphics[width=\linewidth]{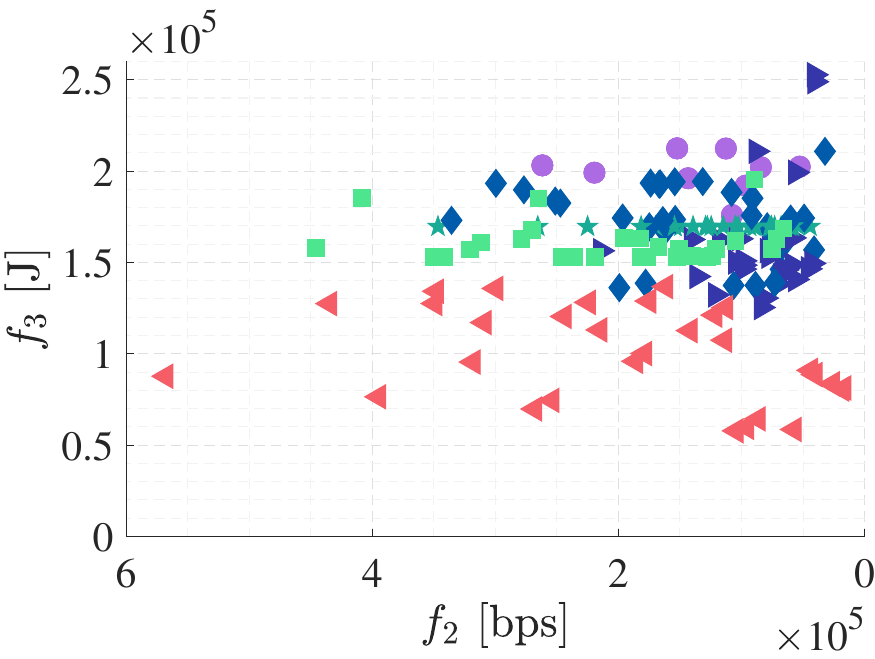}}
			\subfloat[]{
				\label{Figure:TopView}
				\includegraphics[width=\linewidth]{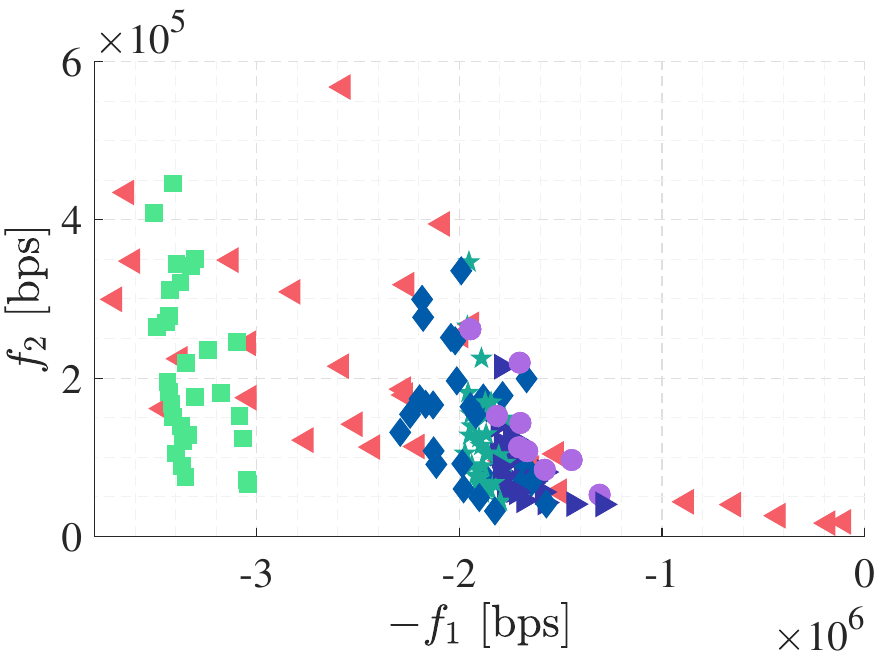}}
		\end{minipage}
		
		\caption{Solution distributions obtained by different algorithms of (a) perspective view, (b) front plan view, (c) left plan view and (d) top plan view. The direction pointed by the red arrow is the direction of true Pareto front.} 
		\label{Figure:Pareto Front}
	\end{figure}

	\par Firstly, we show numerical results obtained by these algorithms in terms of the achievable sum rate of IoT terminal devices, the achievable sum rate of $\mathcal{E}$, the energy consumption of UAV swarm and HV indicator in Table~\ref{Table:Numerical optimization results}. As can be observed, our proposed IMOGOA achieves the best results on all three objectives compared to other algorithms. Additionally, the HV indicator of IMOGOA is also larger than other algorithms, which indicates that the comprehensive performance of IMOGOA is superior among all algorithms. To further demonstrate, Fig.~\ref{Figure:Pareto Front} shows the Pareto solution distributions of these algorithms. Note that $f_{1}$ exhibits negative values because a unified optimization direction is required negative values for $f_{1}$ in Eq.~\eqref{UASCMOP}. It can be seen from the figure that the proposed IMOGOA algorithm has a closer estimated Pareto Front to the actual Pareto Front when compared to other algorithms. Specially, Fig.~\ref{Figure:Comparison of solution distributions obtained by MOGOA and IMOGOA at different iterations} depicts a more intuitive comparison of solution distributions obtained by MOGOA and IMOGOA at various iterations, which highlights that IMOGOA is closer to the pareto front than MOGOA over iteration number, indicating the contributions of some improvement factors introduced in IMOGOA. In summary, these results demonstrate that the proposed IMOGOA is more suitable for solving the US$^2$RMOP and outperforms other benchmarks in terms of performance.

	\begin{figure}[htbp]
		\begin{minipage}[b]{\linewidth}
			\subfloat[]{
				\label{Figure:NormalCompare}
				\includegraphics[width=\linewidth,scale=1.00]{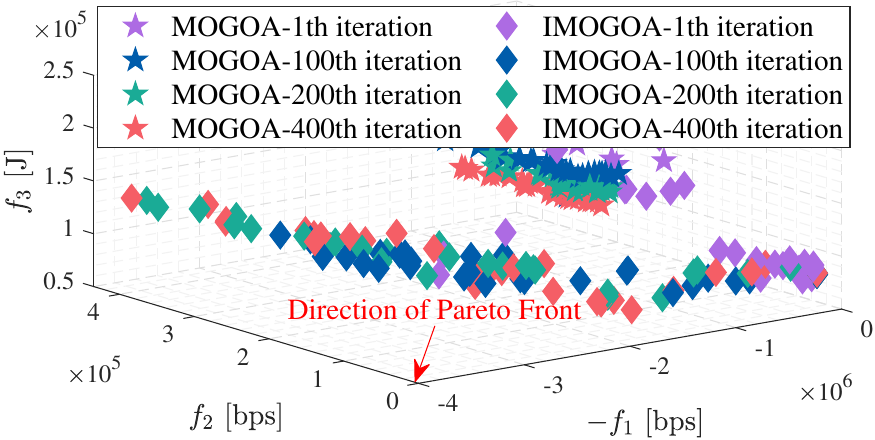}}
		\end{minipage}
		\begin{minipage}[b]{0.32\linewidth}
			\subfloat[]{
				\label{Figure:FrontViewCompare}
				\includegraphics[width=\linewidth]{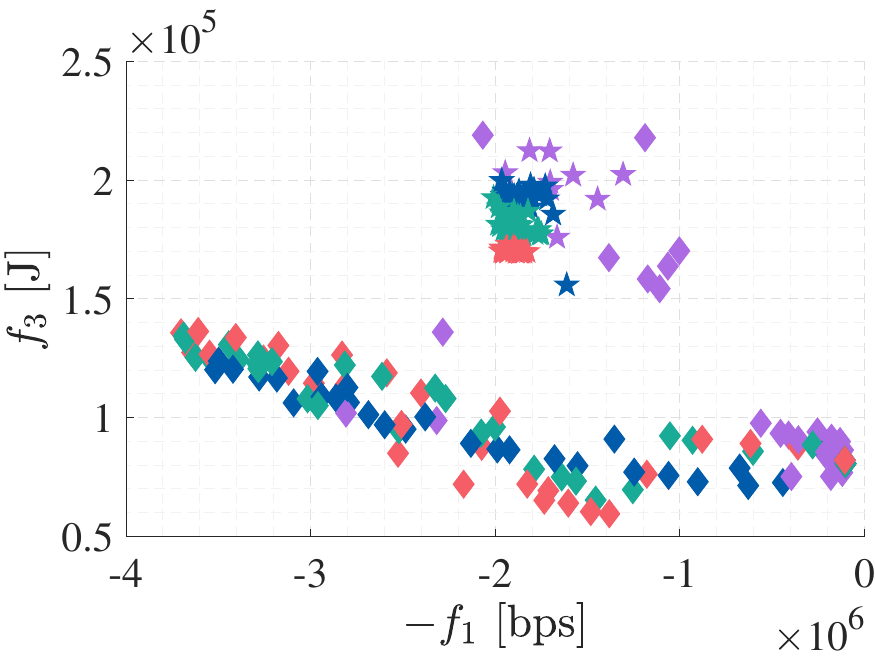}}
			\subfloat[]{
				\label{Figure:LeftViewCompare}
				\includegraphics[width=\linewidth]{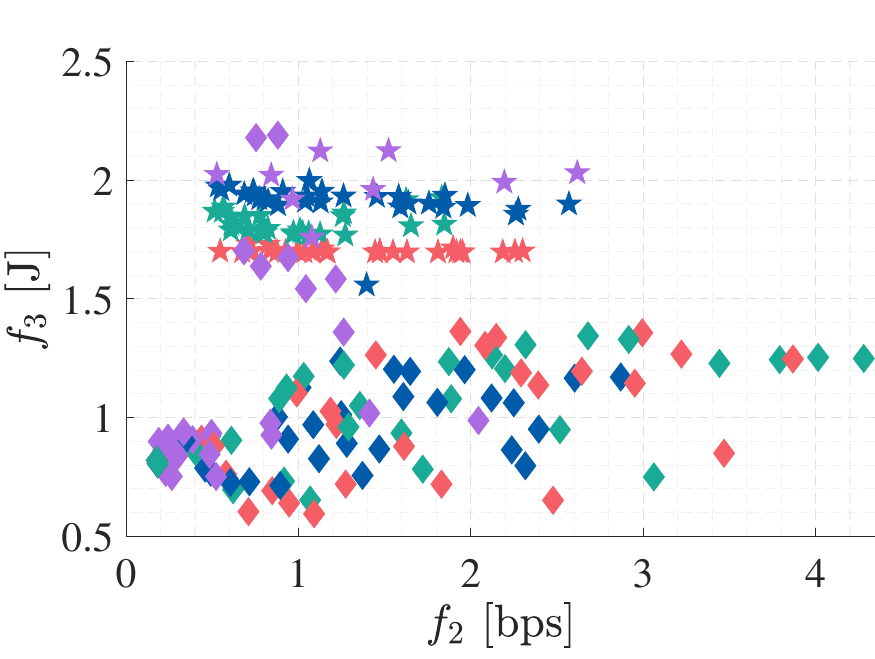}}
			\subfloat[]{
				\label{Figure:TopViewCompare}
				\includegraphics[width=\linewidth]{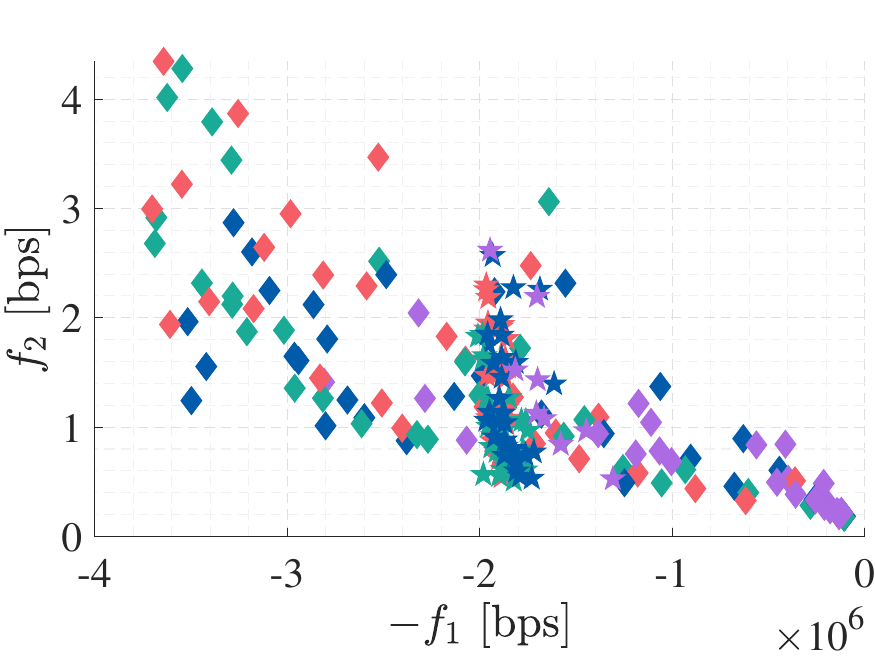}}
		\end{minipage}
		
		\caption{Comparison of solution distributions obtained by MOGOA and IMOGOA of (a) perspective view, (b) front plan view, (c) left plan view and (d) top plan view at different iterations: (i) $1$th iteration. (ii) $100$th iteration. (iii) $200$th iteration. (iv) $400$th iteration.}
		\label{Figure:Comparison of solution distributions obtained by MOGOA and IMOGOA at different iterations}
	\end{figure}
	
	\begin{table}
		\renewcommand\arraystretch{1.2}
		
		\setlength\tabcolsep{3pt}
		\centering 
		\caption{Numerical Optimization Results Obtained by Different Algorithms}  
		\label{Table:Numerical optimization results}  
		{\scriptsize{
				\begin{tabular}{l l l l l}\toprule[1.5pt]
					{Algorithms} & {$f_{1}$ [bps]} & {${f_2}$ [bps]} &{$f_{3}$ [J]} & {$HV$}\\ \toprule[1pt]
					{NSGA-II} &{$2.2921 \times 10^6$} & {$3.2145\times 10^4$} & {$1.3617 \times 10^5$} & {$0.4604$} \\
					{MOPSO} & {$1.9461 \times 10^6$} & {$5.2678 \times 10^4$} & {$1.7594 \times 10^5$} & {$0.3358$}\\
					{MOGWO} & {$1.8129\times 10^6$} & {$4.0500 \times 10^4$} & {$1.2532 \times 10^5$} & {$0.4695$}\\
					{MOMVO} & {$3.5067 \times 10^6$} & {$6.5821 \times 10^4$} & {$1.5278 \times 10^5$} & {$0.4303$}\\
					{MOGOA} & {$1.9727 \times 10^6$} & {$4.4313 \times 10^4$} & {$1.6964 \times 10^5$} & {$0.3801$}\\
					{IMOGOA} & \pmb{$3.7021 \times 10^6$} & \pmb{$1.6846 \times 10^4$} & \pmb{$5.7818 \times 10^4$} & \pmb{$0.6615$}\\
					\toprule[1.5pt]
		\end{tabular}  }}
	\end{table}
	{\color{color}
		\subsubsection{Results with Multiple Eavesdroppers}
		\label{SubSubsection: Results with Multiple Eavesdroppers}
		\par In this section, we consider two situations with different numbers of eavesdroppers as follows.
		\begin{itemize}
			\item \textit{Only collude in time domain (OCTD):} In this situation, the eavesdropper simply only colludes alone in time domain, and mutliple eavesdroppers do not cooperate with each other.
			\item \textit{Collude both in time and space domains (CTSD):} In this situation, the eavesdropper collude alone in time domain. Moreover, multiple eavesdroppers in different positions can collude in space domain with each other.
		\end{itemize}

		\par Fig.~\ref{Figure: Performance comparison with different number of eavesdroppers} shows the performance comparison under OCTD and CTSD with different numbers of eavesdroppers. First, our proposed IMOGOA outperforms these benchmarks in all three objectives and has more stable performance for the two situations. Moreover, when concentrating on $f_{2}$ under CTSD, it can be seen that IMOGOA is on a slow growth trend with the increasing of the number of eavesdroppers, which also better validates the efficiency and scalability of our improvements on MOGOA for solving the US$^2$RMOP.}
	\begin{figure}[htbp]
		\centering
		\subfloat[]{
			\label{Non-Collusive f1}
			\includegraphics[width=0.49\linewidth]{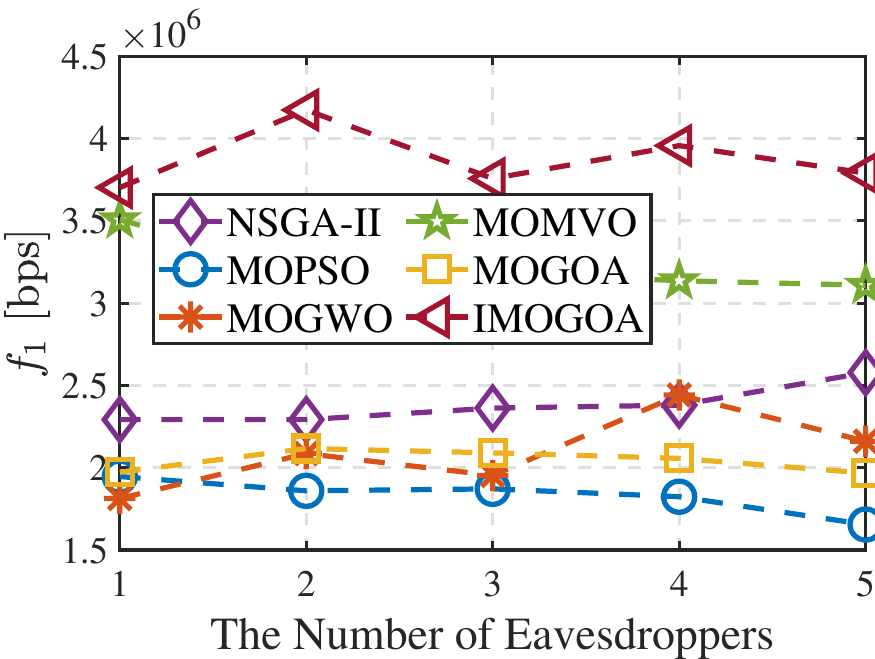}}
		\subfloat[]{
			\label{Collusive f1}
			\includegraphics[width=0.49\linewidth]{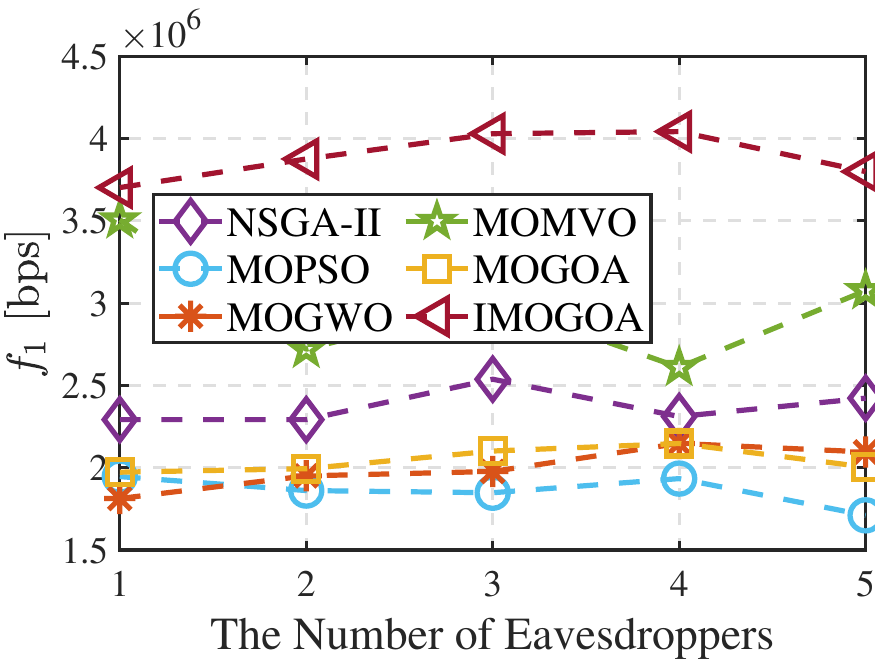}}
		\hfill
		\subfloat[]{
			\label{Non-Collusive f2}
			\includegraphics[width=0.49\linewidth]{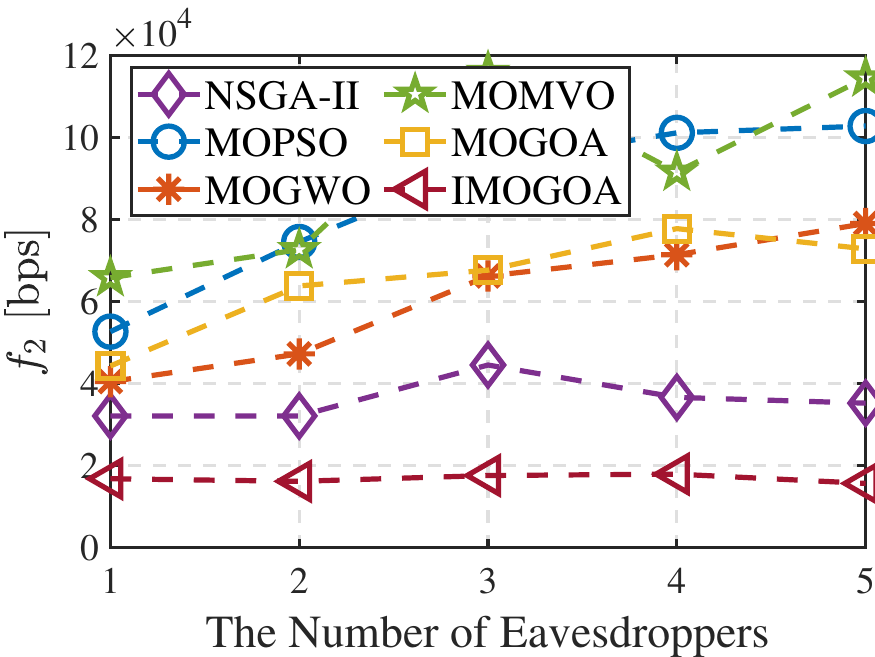}}
		\subfloat[]{
			\label{Collusive f2}
			\includegraphics[width=0.49\linewidth]{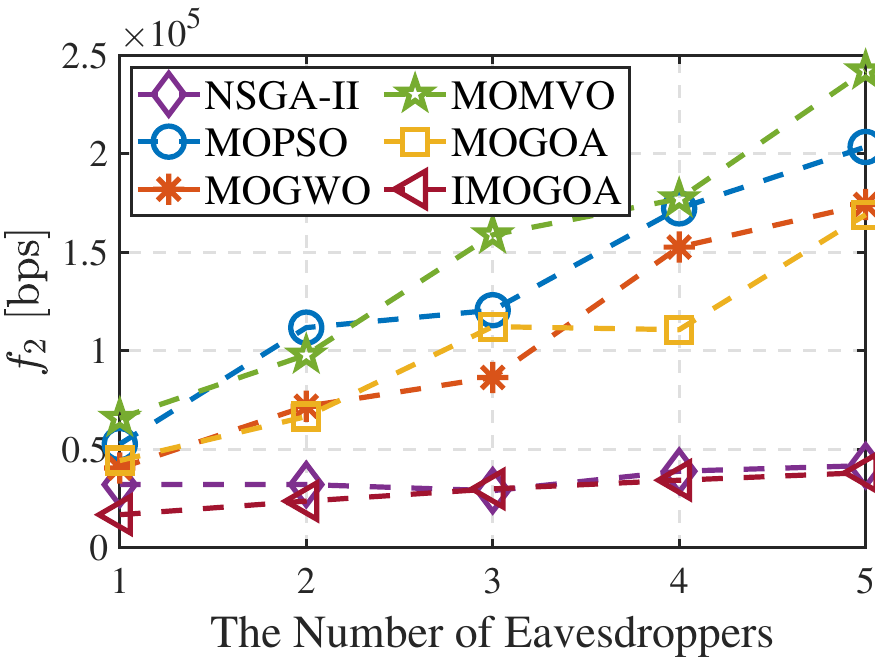}}
		\hfill
		\subfloat[]{
			\label{Non-Collusive f3}
			\includegraphics[width=0.49\linewidth]{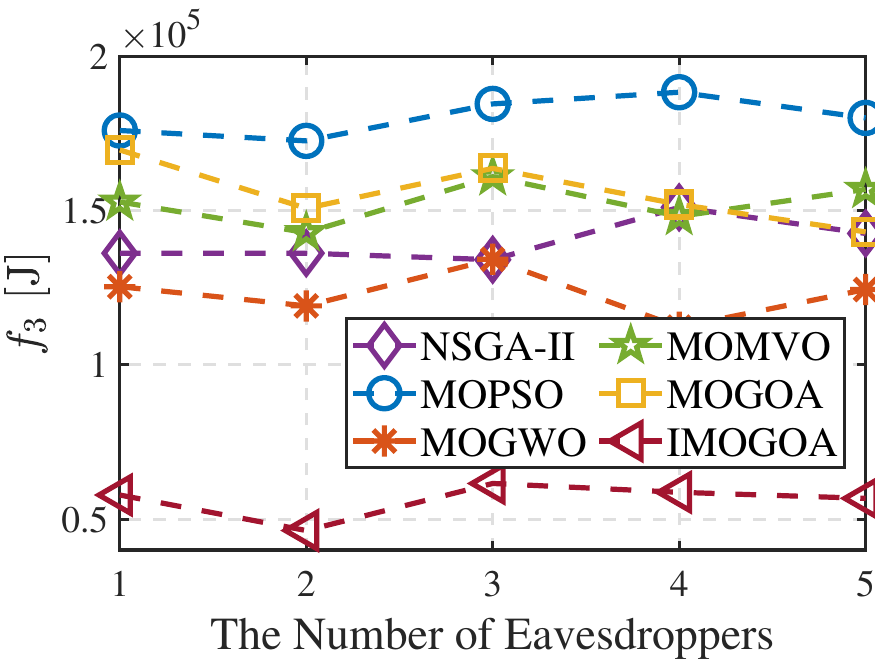}}
		\subfloat[]{
			\label{Collusive f3}
			\includegraphics[width=0.49\linewidth]{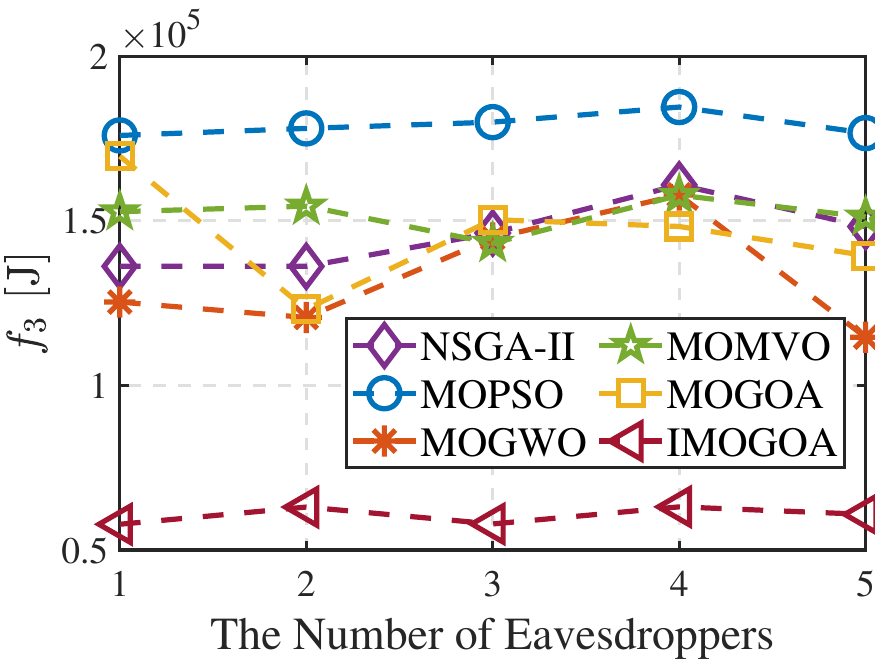}}
		\caption{Performance comparison with different number of eavesdroppers. (a) $f_{1}$ under OCTD. (b) $f_{1}$ under CTSD. (c) $f_{2}$ under OCTD. (d) $f_{2}$ under CTSD. (e) $f_{3}$ under OCTD. (f) $f_{3}$ under CTSD.}
		\label{Figure: Performance comparison with different number of eavesdroppers}
	\end{figure}
	\section{Conclusion} 
	\label{Section:Conclusion}
	
	\par In this paper, the UAV swarm-enabled collaborative secure relay system is proposed where a UAV swarm serves for forwarding confidential messages from the source MBS with PAA to the remote IoT terminal devices via CB so as to counteract the threat of time-domain collusive eavesdropper. Furthermore, we formulate an US$^2$RMOP to maximize the achievable sum rate of all IoT terminal devices, minimizing the achievable sum rate of the eavesdropper, and minimizing the energy consumption of UAV swarm. Subsequently, an IMOGOA with several improvements is proposed to solve US$^2$RMOP. Simulation results illustrate the effectiveness of the proposed UAV swarm-enabled collaborative secure relay system by comparing both traditional UAV swarm-enabled multi-hop relay and LAA relay strategies, and verify that IMOGOA has better performance than several other comparison algorithms. {\color{color}In addition, IMOGOA is more stable and effective under OCTD and CTSD with multiple eavesdroppers.}
	
	\bibliographystyle{IEEEtran}
	\bibliography{cite}
	
	\vspace{10pt}
	
	\begin{IEEEbiography}[{\includegraphics[width=1in,height=1.25in,clip,keepaspectratio]{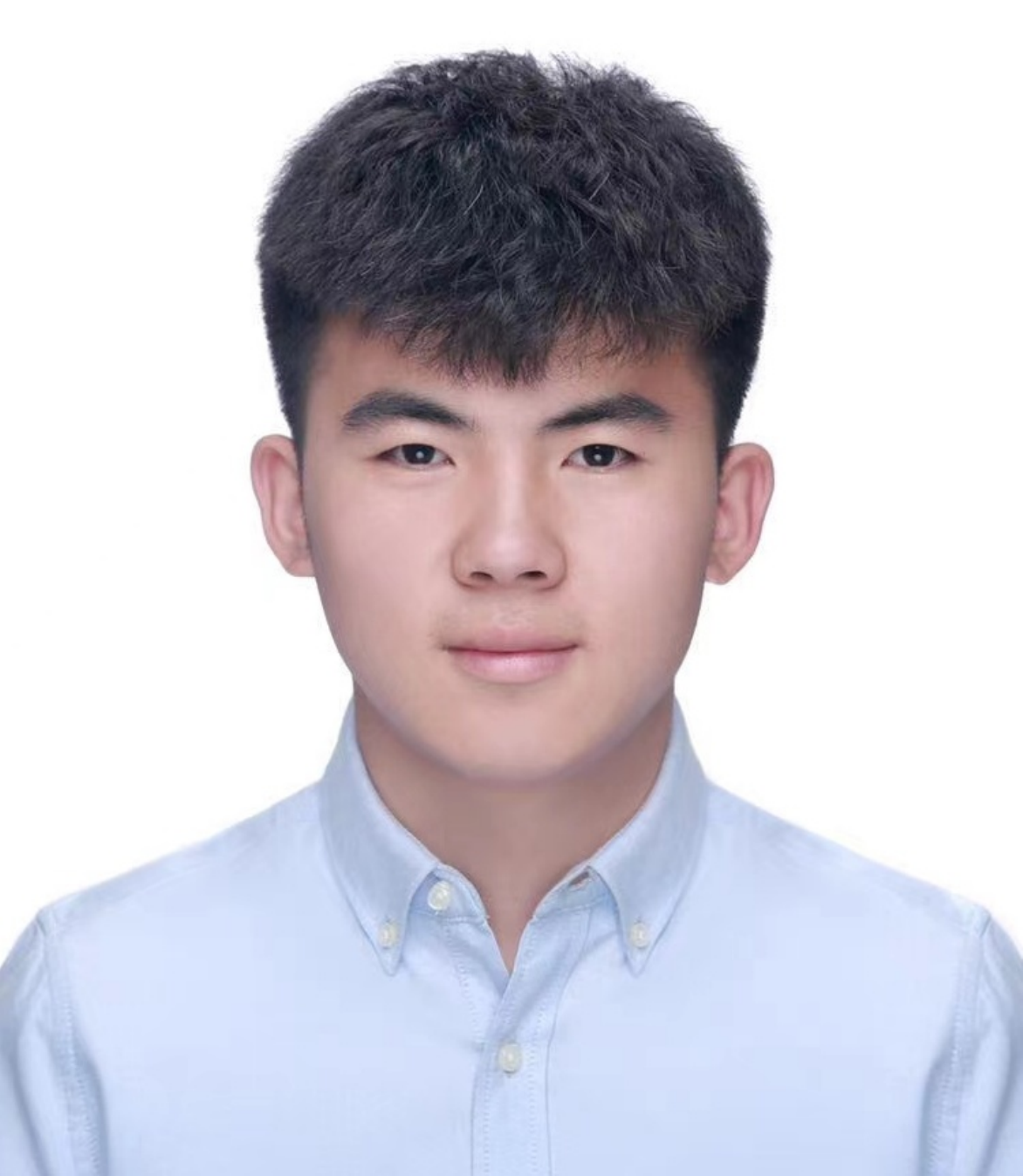}}]{Chuang Zhang} received the B.S. degree in computer science and technology from Jilin University, Changchun, China, in 2021, where he is currently pursuing the Ph.D. degree with the College of Computer Science and Technology.
		\par His current research interests include UAV communications, distributed beamforming and multi-objective optimization.
	\end{IEEEbiography}
	
	\begin{IEEEbiography}[{\includegraphics[width=1in,height=1.25in,clip,keepaspectratio]{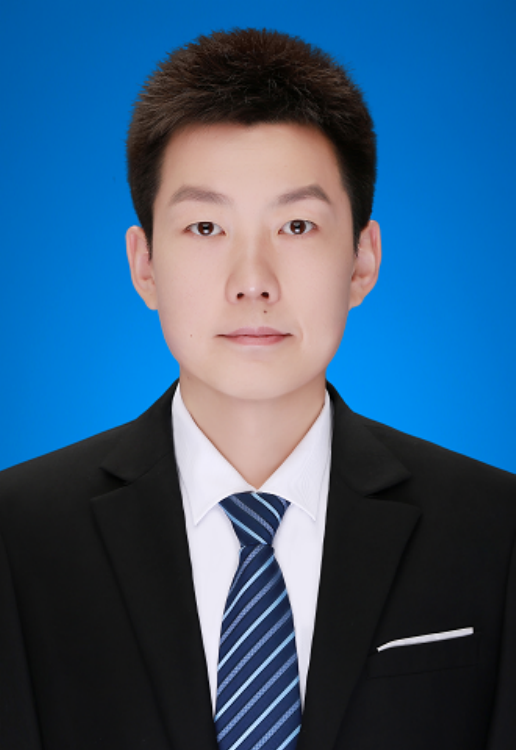}}]{Geng Sun} (S'17-M'19) received the B.S. degree in communication engineering from Dalian Polytechnic University, and the Ph.D. degree in computer science and technology from Jilin University, in 2011 and 2018, respectively. He was a Visiting Researcher with the School of Electrical and Computer Engineering, Georgia Institute of Technology, USA. He is an Associate Professor in College of Computer Science and Technology at Jilin University, and His research interests include wireless networks, UAV communications, collaborative beamforming and optimizations.
	\end{IEEEbiography}
	
	\begin{IEEEbiography}[{\includegraphics[width=1in,height=1.25in,clip,keepaspectratio]{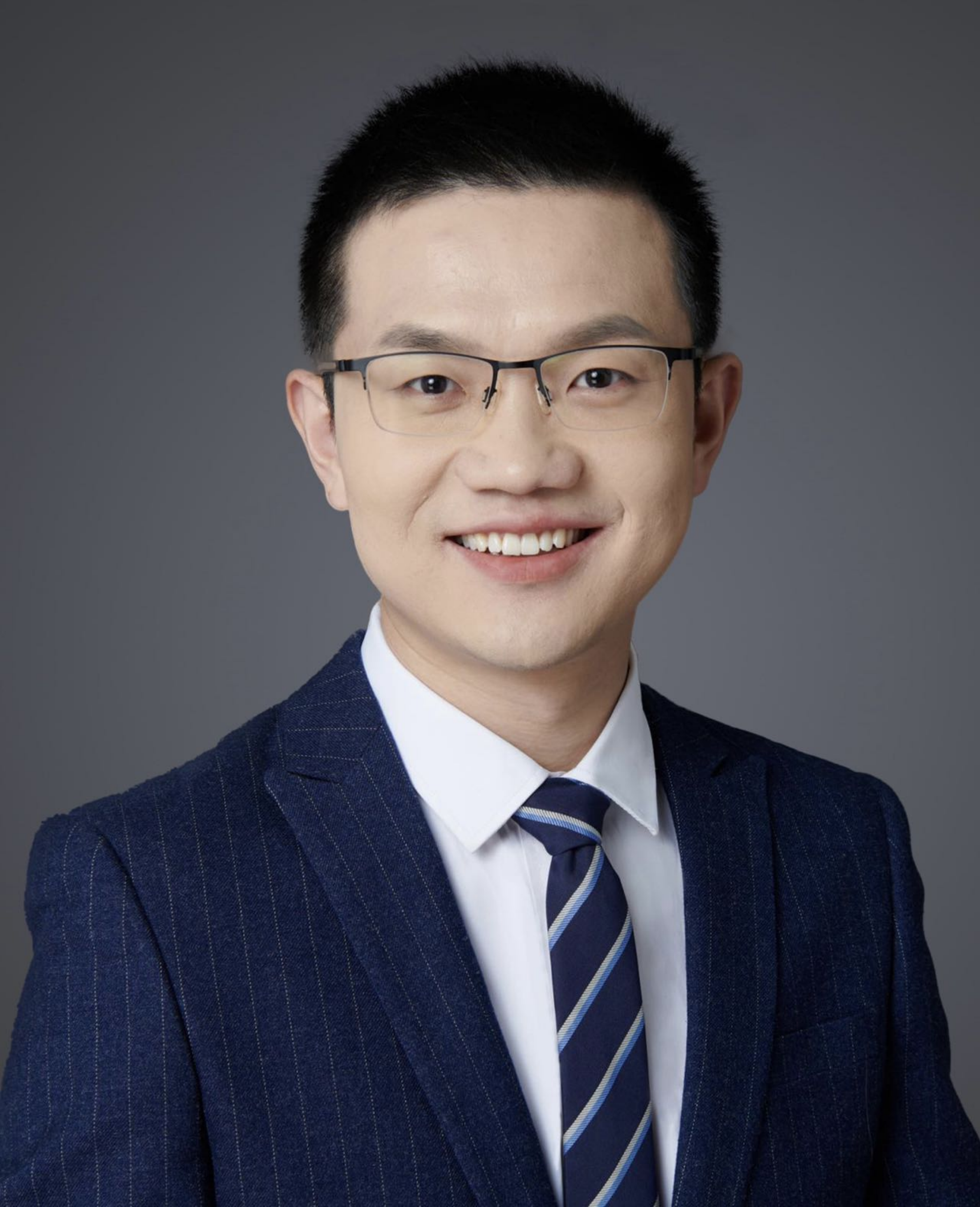}}]{Qingqing Wu} (S’13-M’16-SM’21) received the B.Eng. and the Ph.D. degrees in Electronic Engineering from South China University of Technology and Shanghai Jiao Tong University (SJTU) in 2012 and 2016, respectively. From 2016 to 2020, he was a Research Fellow in the Department of Electrical and Computer Engineering at National University of Singapore. He is currently an Associate Professor with Shanghai Jiao Tong University. His current research interest includes intelligent reflecting surface (IRS), unmanned aerial vehicle (UAV) communications, and MIMO transceiver design. He has coauthored more than 100 IEEE journal papers with 26 ESI highly cited papers and 8 ESI hot papers, which have received more than 18,000 Google citations. He was listed as the Clarivate ESI Highly Cited Researcher in 2022 and 2021, the Most Influential Scholar Award in AI-2000 by Aminer in 2021 and World’s Top 2\% Scientist by Stanford University in 2020 and 2021.
		\par He was the recipient of the IEEE Communications Society Asia Pacific Best Young Researcher Award and Outstanding Paper Award in 2022, the IEEE Communications Society Young Author Best Paper Award in 2021, the Outstanding Ph.D. Thesis Award of China Institute of Communications in 2017, the Outstanding Ph.D. Thesis Funding in SJTU in 2016, the IEEE ICCC Best Paper Award in 2021, and IEEE WCSP Best Paper Award in 2015. He was the Exemplary Editor of IEEE Communications Letters in 2019 and the Exemplary Reviewer of several IEEE journals. He serves as an Associate Editor for IEEE Transactions on Communications, IEEE Communications Letters, IEEE Wireless Communications Letters, IEEE Open Journal of Communications Society (OJ COMS), and IEEE Open Journal of Vehicular Technology (OJVT). He is the Lead Guest Editor for IEEE Journal on Selected Areas in Communications on “UAV Communications in 5G and Beyond Networks”, and the Guest Editor for IEEE OJVT on “6G Intelligent Communications” and IEEE OJ-COMS on “Reconfigurable Intelligent Surface-Based Communications for 6G Wireless Networks”. He is the workshop co-chair for IEEE ICC 2019-2022 workshop on “Integrating UAVs into 5G and Beyond”, and the workshop co-chair for IEEE GLOBECOM 2020 and ICC 2021 workshop on “Reconfigurable Intelligent Surfaces for Wireless Communication for Beyond 5G”. He serves as the Workshops and Symposia Officer of Reconfigurable Intelligent Surfaces Emerging Technology Initiative and Research Blog Officer of Aerial Communications Emerging Technology Initiative. He is the IEEE Communications Society Young Professional Chair in Asia Pacific Region.
	\end{IEEEbiography}
	
	\begin{IEEEbiography}[{\includegraphics[width=1in,height=1.25in,clip,keepaspectratio]{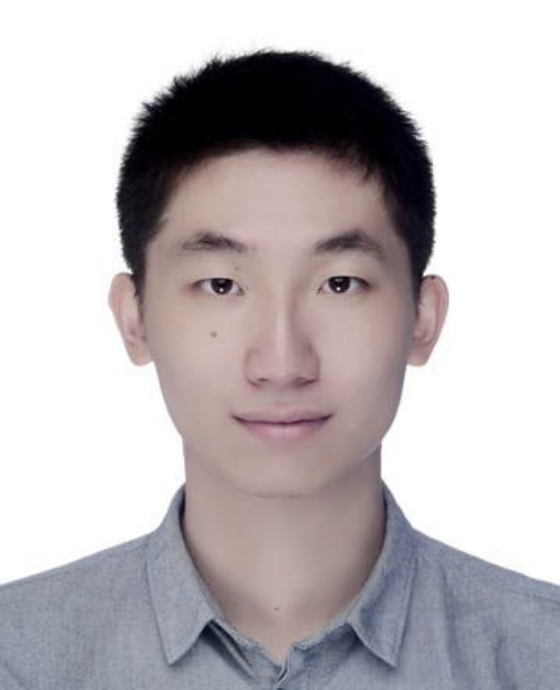}}]{Jiahui Li} (S'21) received a BS degree in Software Engineering, and an MS degree in Computer Science and Technology from Jilin University, Changchun, China, in 2018 and 2021, respectively. He is currently studying Computer Science at Jilin University to get a Ph.D. degree, and also a visiting Ph. D. at Singapore University of Technology and Design (SUTD), Singapore. His current research focuses on UAV networks, antenna arrays, and optimization.
	\end{IEEEbiography}

	\begin{IEEEbiography}[{\includegraphics[width=1in,height=1.25in,clip,keepaspectratio]{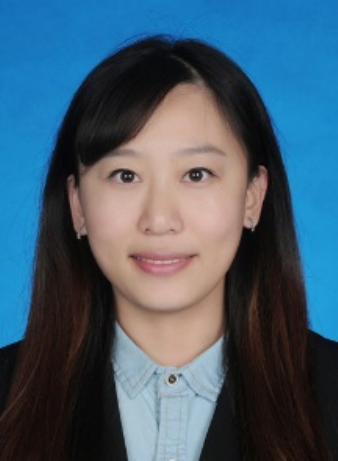}}]{Shuang Liang} received the B.S. degree in Communication Engineering from Dalian Polytechnic University, China in 2011, the M.S. degree in Software Engineering from Jilin University, China in 2017, and the Ph.D. degree in Computer Science from Jilin University, China in 2022. She is a post-doctoral in the School of Information Science and Technology, Northeast Normal University, and her research interests focus on wireless communication and UAV networks.
	\end{IEEEbiography}
	
	\begin{IEEEbiography}[{\includegraphics[width=1in,height=1.25in,clip,keepaspectratio]{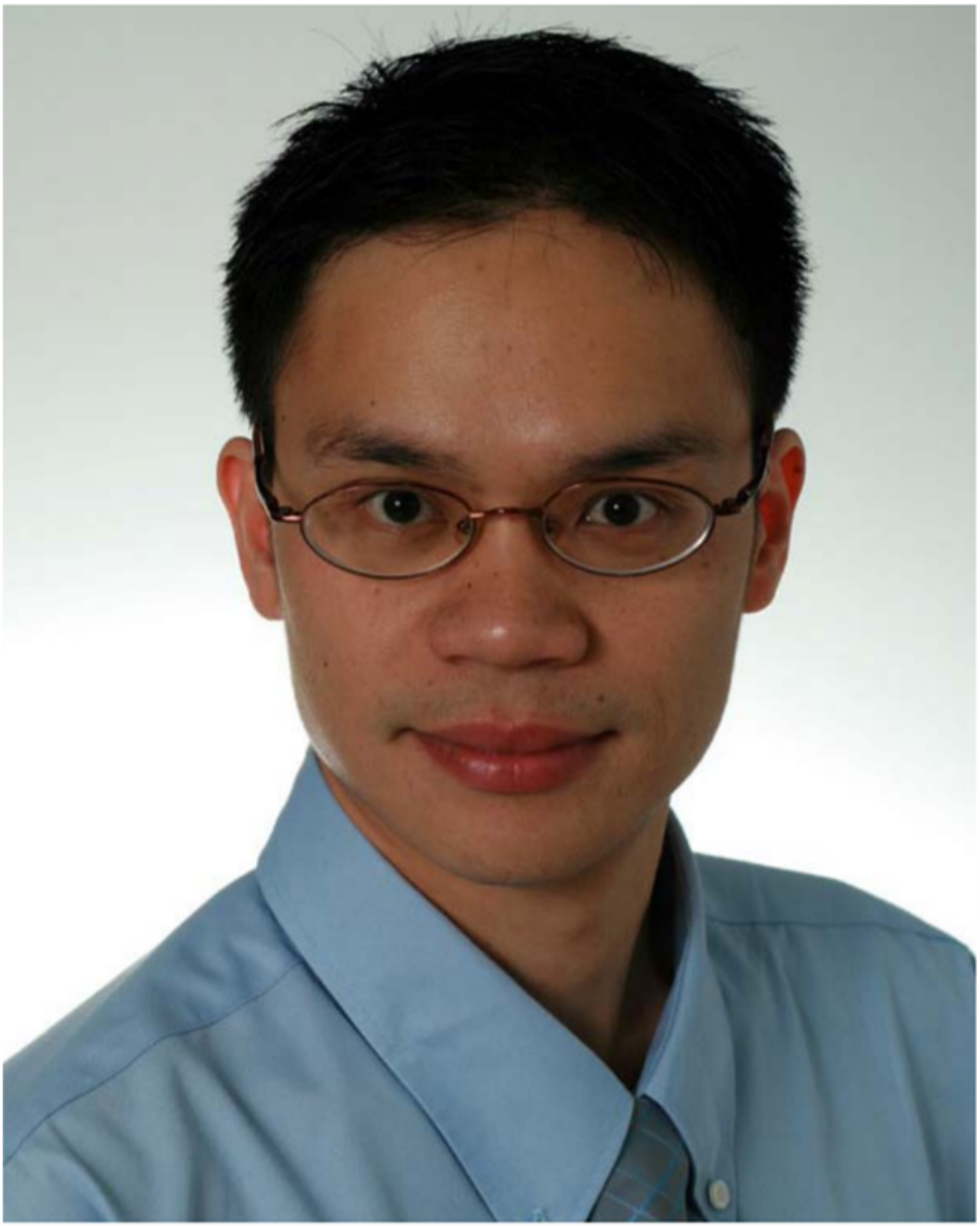}}]{Dusit-Niyato} (Fellow, IEEE) received the B.Eng. degree from the King Mongkuts Institute of Technology Ladkrabang (KMITL), Thailand, in 1999, and the Ph.D. degree in electrical and computer engineering from the University of Manitoba, Canada, in 2008. He is currently a Professor with the School of Computer Science and Engineering, Nanyang Technological University, Singapore. His research interests include the Internet of Things (IoT), machine learning, and incentive mechanism design.
	\end{IEEEbiography}
	
	\begin{IEEEbiography}[{\includegraphics[width=1in,height=1.25in,clip,keepaspectratio]{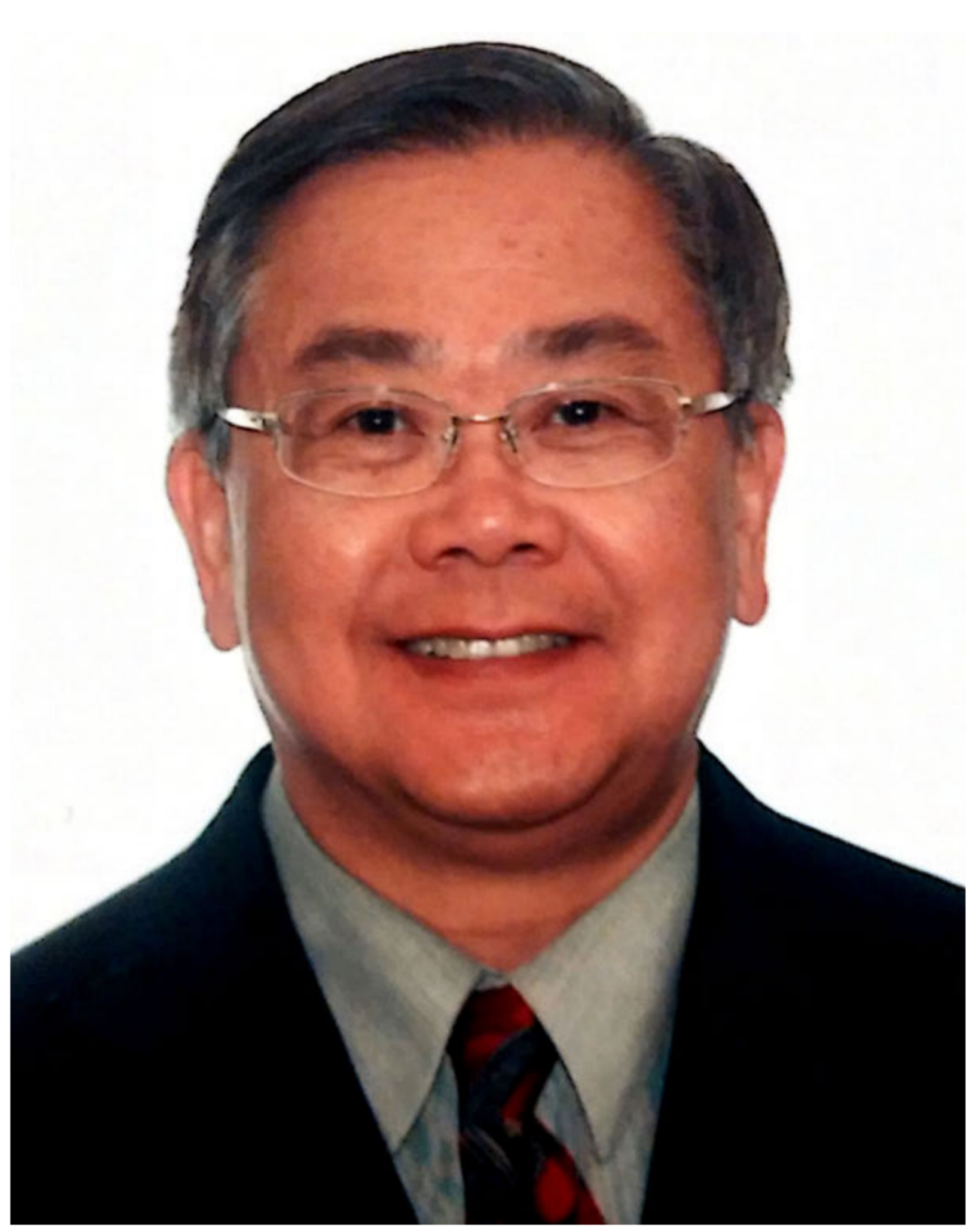}}]{Victor C. M. Leung} (Life Fellow, IEEE) is a Distinguished Professor of computer science and software engineering with Shenzhen University, China. He is also an Emeritus Professor of electrial and computer engineering and the Director of the Laboratory for Wireless Networks and Mobile Systems at the University of British Columbia (UBC). His research is in the broad areas of wireless networks and mobile systems. He has co-authored more than 1300 journal/conference papers and book chapters. Dr. Leung is serving on the editorial boards of IEEE Transactions on Green Communications and Networking, IEEE Transactions on Cloud Computing, IEEE Access, and several other journals. He received the IEEE Vancouver Section Centennial Award, 2011 UBC Killam Research Prize, 2017 Canadian Award for Telecommunications Research, and 2018 IEEE TCGCC Distinguished Technical Achievement Recognition Award. He co-authored papers that won the 2017 IEEE ComSoc Fred W. Ellersick Prize, 2017 IEEE Systems Journal Best Paper Award, 2018 IEEE CSIM Best Journal Paper Award, and 2019 IEEE TCGCC Best Journal Paper Award. He is a Life Fellow of IEEE, and a Fellow of the Royal Society of Canada, Canadian Academy of Engineering, and Engineering Institute of Canada. He is named in the current Clarivate Analytics list of Highly Cited Researchers.
	\end{IEEEbiography}
	\vspace{11pt}
	
	\vfill
	
\end{document}